\begin{document}


\setcounter{page}{175}
\publyear{22}
\papernumber{2125}
\volume{186}
\issue{1-4}

        \finalVersionForARXIV

\title{Resource Bisimilarity in Petri Nets is Decidable}

\author{Irina A.\ Lomazova\thanks{Address for correspondence: ialomazova@gmail.com}
\\
      ialomazova@gmail.com
  \and  Vladimir A. \ Bashkin
 \and Petr Jan\v{c}ar\\
         Dept.$\,$of Computer Science,  Faculty of Science\\
         Palack\'y University, Olomouc, Czechia \\
          	petr.jancar@upol.cz
	}

\maketitle

\runninghead{I. Lomazova et al.}{Resource Bisimilarity in Petri Nets is Decidable}

\vspace*{4mm}
\begin{abstract}
Petri nets are a popular formalism for modeling and analyzing distributed systems.
	Tokens in Petri net models can represent the control flow state or resources produced/consumed by transition firings.
We define a resource as a part (a submultiset) of Petri net markings and call two resources equivalent when replacing one of them with another in any marking does not change the observable Petri net behavior.
We consider resource similarity and resource bisimilarity, two
	congruent restrictions of bisimulation equivalence on Petri
	net markings. Previously it was proved that resource
	similarity (the largest congruence included in bisimulation
	equivalence) is undecidable. Here we present an algorithm for
	checking resource bisimilarity, thereby proving that this
	relation  (the largest congruence
	included in bisimulation equivalence that is a bisimulation)
is decidable.
We also give an example of two resources in a Petri net that are
	similar but not bisimilar.
\end{abstract}

\begin{keywords}
labeled Petri nets, bisimulation, congruence, decidability, resource bisimilarity
\end{keywords}

\eject

\section{Introduction}

The concept of process equivalence can be formalized in many different ways \cite{GLABBEEK2001}.
One of the most important is \emph{bisimulation equivalence}~\cite{Park1981, Milner1989}, which captures the main features of the observable behavior of a system.
Generally, bisimulation equivalence, also called \emph{bisimilarity}, is defined as a relation on sets
of states of  labeled transition systems (LTS). Two states are bisimilar if their behavior cannot be distinguished by an external  observer.

Petri nets are a popular formalism for modeling and analyzing  distributed systems, on which many notations used in practice are based.
In Petri nets, states are represented as markings --- multisets of
tokens residing in Petri net places. The interleaving semantics
associates a labeled Petri net (in which transitions are labeled with
actions) and its initial marking with the corresponding LTS that
describes the behavior of the net. For Petri nets, many important
behavioral properties, e.g.\ reachability, are decidable,
but behavioral equivalences like bisimilarity are undecidable~\cite{Jancar1995}.

As a mathematical formalism, Petri nets are equivalent to the subclass (\textbf{P},\textbf{P})-PRS of Process Rewrite Systems (PRS) introduced by R.~Mayr \cite{Mayr2000}. PRS  allow   a (possibly infinite) LTS to be represented by a finite set of rewrite rules using algebraic operations of sequential and parallel composition.
Several well-studied classes of infinite-state systems are included in the PRS hierarchy, which is based on operations used in rewrite rules to define the specific PRS:
basic process algebras (BPA) are (\textbf{1},\textbf{S})-PRS,
basic parallel processes (BPP) are (\textbf{1},\textbf{P})-PRS,
pushdown automata (PDA) are (\textbf{S},\textbf{S})-PRS,
Petri nets are (\textbf{P},\textbf{P})-PRS and process algebras (PA) are (\textbf{1},\textbf{G})-PRS.

The PRS hierarchy is strict (regarding expressivity) and sets a number
of interesting decidability borders, in particular for bisimilarity.
While bisimilarity is undecidable for (\textbf{P},\textbf{P})-PRS,
corresponding to Petri nets, it is decidable for
the class (\textbf{1},\textbf{P})-PRS~\cite{Christensen1993},
corresponding
to labeled communication-free Petri nets, in which each transition has
no more than one input place. Due to the restriction on the structure
of communication-free Petri nets, bisimilarity for them is a congruence, and this property
was crucial for the decidability proof in~\cite{Christensen1993}.
(This property is also implicitly used in the later
proof~\cite{Jancar22}.)

\medskip

In Petri net models, places can be interpreted as resource repositories, and tokens  represent resource availability or resources themselves. A transition firing, in turn, can be thought of as resource handling, relocation, deletion, and creation. Hence, Petri nets have a clear resource perspective and are widely used to model and analyze various types of resource-oriented systems such as production systems, resource allocation systems, etc.

Since bisimulation equivalence cannot be effectively checked for Petri
nets, it is natural either to restrict the model (for example, to consider  communication-free or one-counter Petri nets), or to strengthen the equivalence relation.
In this paper we explore the second option.  We consider resources as parts of Petri net markings (multisets of tokens residing in Petri net places) and study the possibility of replacing  one resource  with another  in any  marking without changing the observable behavior of the net.

\medskip
\noindent \emph{\textbf{Related work.} }\noindent \smallskip

Numerous studies have been devoted to various aspects of resources in Petri nets. We name just a few of them.

In open nets special resource places are used for modeling a resource interface of the system   \cite{Baldan2015,Heckel2003,Dong2016,Lomazova2013}. In workflow nets resource places represent resources that can be consumed and produced  during the business process execution.  Obviously, resource places not only demonstrate a resource flow, but may substantially influence the system control flow \cite{Bashkin2014,Sidorova2013}.

As a mathematical formalism, Petri nets are closely connected with
Girard's linear logic~\cite{Girard1987}, which also has a nice resource interpretation, and for different classes of Petri nets it is possible to express a Petri net as a linear logic formula
\cite{Farwer1999,Farwer2001}.

\medskip
We have already mentioned bisimulation equivalence, which is defined
as the largest bisimulation in a given LTS.
In \cite{Olderog1989}, an equivalence on places of a Petri net was defined which when lifted to the reachable markings preserves their token game and distribution over the places.
This structural equivalence is called ``strong
bisimulation'' in \cite{Olderog1989}\footnote{Not related to the more common use of terms ``strong bisimulation'' and ``weak bisimulation'' for fundamental state bisimulation and state bisimulation in LTS with invisible (silent) transitions.},  because it is a non-trivial subrelation of bisimilarity.

The term \emph{place bisimulation} comes from \cite{Autant1991} where
Olderog's concept was improved and used for a reduction of nets.
In a Petri net, two bisimilar places can be fused without changing the observable net behavior.

In \cite{Autant1992}, a polynomial algorithm for computing the largest place bisimulation was presented.
Equivalences on  sets of places were further explored in \cite{Autant1994, Pfister1995, Schnoebelen2000}.
In particular, it was proven (using the technique from \cite{Jancar1995}) that a weaker relation, called the  \emph{largest correct place fusion}, is undecidable.
Places $p$ and $q$ can be correctly fused if for any marking $M$ the two markings $M+p$ and $M+q$ are bisimilar.

Independently,  in \cite{Voorhoeve1996} a similar equivalence relation, called \emph{structural bisimilarity}, was studied for labeled Place-Transition Nets.
This relation is defined on the set of all vertices of a Petri net (both places and transitions) and also uses a kind of weak transfer property.

In \cite{Gorrieri2020Team, Gorrieri2020AStudy}, a notion of \emph{team bisimulation} was defined for communication-free Petri nets.
Two distributed systems, each composed of a team of sequential non-cooperating agents (tokens in a communication-free net), are equivalent if it is possible to match each sequential component of one
system with a team-bisimilar sequential component of another system (as in sports, where competing teams have the same number of equivalent player positions).
For Petri nets, the team bisimulation coincides with the place bisimulation from \cite{Autant1991}.

In \cite{Bashkin_FI2003}, a new equivalence on
 Petri net resources, called \emph{resource similarity} was proposed.
As already mentioned, a~resource in a Petri net is a part of its
marking, and
two resources are similar if replacing one of them by another in any
marking does not change the observable net behavior; in our case it
means that the behavior remains in the same class of bisimulation
equivalence.
It was shown
that the correct place fusion equivalence from~\cite{Schnoebelen2000}
is a special case of  resource similarity, namely the place fusion is
resource similarity for one-token resources. Hence the
undecidability result for the place fusion extends to resource
similarity.
On the other hand,
resource similarity is a congruence and thus can be generated by a finite basis. A special type of minimal basis,
called the \emph{ground basis}, was presented in \cite{Bashkin_FI2003}.

In~\cite{Bashkin_FI2003} also a stronger equivalence  of  resources in
a Petri net, called \emph{resource bisimilarity}\footnote{Not related
to the notion of ``resource bisimulation'' from \cite{Corradini1999}.}
was defined;
an equivalence of resources was called a resource bisimulation if its additive and transitive closure is a bisimulation.
 Properties of the mentioned resource equivalences were studied in \cite{Bashkin_PACT2003,Bashkin2005}.
In particular, it was shown that resource bisimulation is defined by a
weak transfer property, similar to the weak transfer property of place
bisimulation. It was proven  that there exists the largest resource
bisimulation, and it coincides with resource bisimilarity.
The questions of whether the resource bisimilarity is decidable and
whether it is strictly stronger than resource similarity remained open
in~\cite{Bashkin_PACT2003,Bashkin2005,10.1007/978-3-319-57861-3_3}.

\medskip\noindent
\emph{\textbf{Our contribution.}}\smallskip

In  this paper we give  answers to these open questions.
Based on the well-known ``tableau method'' (see, e.g.,
\cite{Aceto2011}), we construct an algorithm for checking resource
bisimilarity in Petri nets. Thus, it is
 proved that resource bisimilarity is decidable in Petri nets and
 is strictly stronger than resource similarity.
 We also give an example of a~labeled Petri net in which two similar resources are not resource bisimilar.

Interestingly, for resource similarity and resource bisimilarity
we have that both  are congruences, and thus both have finite bases, but one of them is decidable, and the other is not.

\medskip
\emph{Organization of the paper.} In Section~\ref{sec:prelim} we
recall basic notions and notations, including the result on
finitely-based congruences. Section~\ref{sec:ressimbisim} clarifies
how resource bisimilarity, resource similarity, and bisimilarity are
related.  Section~\ref{sec:algor} shows the announced algorithm
deciding resource bisimilarity, and Section~\ref{sec:conclusions}
provides conclusions and additional remarks.

\section{Preliminaries}\label{sec:prelim}

By $\Nat$ and $\Nat_+$ we denote the sets  of non-negative integers
and of positive integers, respectively.

\paragraph{Multisets.}
A \emph{multiset} $m$ over a set $S$ is a mapping $m:
S\rightarrow\Nat$; hence
a multiset may contain several copies of the same element.
By $\mathcal{M}(S)$ we denote the set of multisets over $S$ (which
could be also denoted by $\Nat^S$).
The inclusion $\subseteq$ on $\mathcal{M}(S)$
coincides with the component-wise order $\leq$\,:
we put $m\subseteq m'$, or $m\leq m'$, if
 $\forall s\in S: m(s) \leq m'(s)$.
By $m=\emptyset$ we mean that $m(s)=0$ for all $s\in S$.

Given $m,m'\in\mathcal{M}(S)$,
the union $m\cup m'$
is generally different than the sum $m+m'$: for each $s\in S$ we have
$(m\cup m')(s)=\max\,\{m(s),m'(s)\}$ and
$(m+m')(s)=m(s)+m'(s)$.
We also define the multiset subtraction $m-m'$:
 for each $s\in S$ we have
$(m - m')(s)=\max\,\{\,m(s)-m'(s),0\,\}$.

\medskip
In this paper we only deal with multisets over finite sets $S$.
In this case the cardinality $|m|$ of each $m\in\mathcal{M}(S)$ is
finite: we have $|m|=\sum_{s\in S}m(s)$.
We note that the partial order $\leq$ on $\mathcal{M}(S)$ can be
naturally extended to a total order, the
\emph{cardinality-lexicographic} order $\sqsubseteq$\,:
\begin{center}
we put $m\sqsubseteq m'$ if $|m|<|m'|$, or $|m|=|m'|$ and
$m\leqlex m'$,
\end{center}
	where $\leqlex$ is a lexicographic order.
More precisely, the \emph{lexicographic order} $\leqlex$ on $\mathcal{M}(S)$
assumes that the elements of $S$ are ordered, in which case
$m\in\mathcal{M}(S)$ can be also viewed as a~vector
$m=((m)_1,(m)_2,\dots,(m)_{|S|})$
in $\Nat^{|S|}$; we put
$m\leqlex m'$ if $m=m'$ or
 $(m)_i<(m')_i$ for the least $i$ for which $(m)_i$ and $(m')_i$
differ.

\paragraph{Labeled Petri nets.}

A \emph{Petri net} is a tuple $N=(P,T,W)$  where $P$ and $T$
are finite disjoint sets of \emph{places} and \emph{transitions},
respectively, and $W:(P\times T)\cup(T\times P)\to\Nat$ is an
\emph{arc-weight function}.
A \emph{labeled Petri net} is a tuple $N=(P,T,W,l)$ where
$(P,T,W)$ is a Petri net and $l:T\to\Act$ is a~labeling function,
mapping the transitions to \emph{actions} (observed events) from a set
$\Act$.
Figure~\ref{fig:buying} shows an example of a labeled Petri net, with
three transitions depicted by boxes, all being labeled with the
same action $b$; the places are depicted by circles, and
the arc-weight function is presented by (multiple) directed arcs
(there is no arc from $p\in P$ to $t\in T$ if $W(p,t)=0$, and
similarly no arc from $t$ to $p$ if $W(t,p)=0$).

A \emph{marking} in a Petri net  $N=(P,T,W)$
is a function $M: P\to \Nat$, hence a multiset $M\in\Mpp$;
it gives the number of \emph{tokens} in each place.
Tokens residing in a place are often interpreted as resources of
some type, which are consumed or produced by transition firings (as
defined below).

For a transition $t\in T$,
the \emph{preset} $\pre{t}$ and the
\emph{postset} $\post{t}$ are defined as the multisets over $P$
such that $\pre{t}(p)= W(p,t)$ and $\post{t}(p)= W(t,p)$ for each
$p\in P$. A \emph{transition} $t\in T$ \emph{is enabled in a marking} $M$
if $\pre{t}\leq M$; such a \emph{transition may fire in} $M$, yielding
the marking
$M'=(M-\pre{t})+\post{t}$ (hence $M'(p)= M(p)-W(p,t)+W(t,p)$ for each
$p\in P$). We denote this firing by $M\tu{t}M'$.

\paragraph{Bisimulation equivalence (in labeled Petri nets).}
Informally speaking, two markings (states) in a labeled Petri net are considered equivalent if they generate the same observable behavior.
Finding equivalent states may be very helpful for reducing the state space
 when analyzing behavioral properties of a given net.
A classical behavioral equivalence is bisimulation equivalence, also
called bisimilarity~\cite{Milner1989}. We recall its definition in the
framework of labeled Petri nets; below we thus implicitly assume a~fixed
labeled Petri net $N=(P,T,W,l)$.

\medskip
We say that a
\emph{relation} $R\subseteq \Mpp\times\Mpp$ \emph{satisfies} the \emph{\it
transfer property w.r.t. a relation} $R'\subseteq \Mpp\times\Mpp$
if for each pair $(M_1,M_2)\in R$ and for each
firing $M_1\tu{t}M_1'$ there exists an (``imitating'') firing
$M_2\tu{u}M_2'$ such that $l(u)=l(t)$  and $(M_1',M_2')\in R'$.
We can illustrate this property by the following diagram.

$\qquad\qquad\qquad\qquad\qquad\qquad M_1\qquad
R \qquad M_2$

$\qquad\qquad\qquad\qquad\qquad\qquad~\downarrow
t\quad\qquad\qquad\downarrow (\exists)u: l(u)=l(t)$

$\qquad\qquad\qquad\qquad\qquad\qquad M_1'\qquad R' \qquad M_2'$
\\
By saying that $R$ \emph{has the transfer property} we mean that
$R$ satisfies the transfer property w.r.t. itself (hence $R'=R$ in the
diagram).

\medskip
A relation $R\subseteq \Mpp\times\Mpp$ is a \emph{bisimulation} if
$R$ has the transfer property and also $R^{-1}$ has the transfer property.
Markings $M_1$ and $M_2$ are  \emph{bisimilar} (or \emph{bisimulation
equivalent}), written $M_1 \sim M_2$,
if there exists a bisimulation $R$ such that $(M_1,M_2)\in R$; the
relation $\sim$, called \emph{bisimilarity}
(or \emph{bisimulation equivalence}),
 is thus the union of all bisimulations.
It is easy to check that bisimilarity is an equivalence
(a reflexive, symmetric, and transitive relation), and that it is
itself a~bisimulation.
Hence the relation $\sim$ is the largest bisimulation (related to our
fixed labeled Petri net $N=(P,T,W,l)$).

\medskip
It is also standard to define
\emph{stratified bisimilarity relations}~\cite{Hennessy1985}
$\sim_i\subseteq \Mpp\times \Mpp$, for $i\in\Nat$:
\begin{itemize}
\itemsep=0.9pt
\item
	$\sim_0\mathop{=}\Mpp\times\Mpp$
		(hence $M_1\sim_0 M_2$ for all $M_1,M_2\in\Mpp$);
\item
	$\sim_{i+1}$ is the largest symmetric relation that satisfies
		the transfer property w.r.t. $\sim_i$
		(hence
	$M_1\sim_{i+1}M_2$ iff for each $M_1\tu{t}M_1'$ there is
		$M_2\tu{u}M_2'$ where $l(u)=l(t)$ and $M_1'\sim_i M_2'$
and for each $M_2\tu{t}M_2'$ there is $M_1\tu{u}M_1'$ where $l(u)=l(t)$
		and $M_1'\sim_i M_2'$).
\end{itemize}
It is easy to note that $\sim_i$ are equivalences (for all $i\in\Nat$),
$\sim_0\mathop{\supseteq}\sim_1\mathop{\supseteq}\sim_2\mathop{\supseteq}\cdots$,
and
$\sim\mathop{=}\bigcap_{i\in\Nat}\sim_i$;
hence $M_1\sim M_2$ iff $M_1\sim_i M_2$ for all $i\in\Nat$
(since labeled Petri nets generate image-finite labeled transition
systems~\cite{Hennessy1985}).

While bisimilarity is undecidable for labeled Petri
nets~\cite{Jancar1995}, we note that
it is decidable for so called \emph{communication-free labeled Petri
nets}~\cite{Christensen1993}
(also know as (\textbf{1},\textbf{P})-systems~\cite{Mayr2000}, or
Basic Parallel Processes~\cite{Christensen1993Thesis}).
The communication-free Petri
nets $(P,T,W)$ satisfy $|\pre{t}| \leq 1$ for all $t\in T$, and it is
thus easy to observe that in this case bisimulation equivalence
 is a congruence w.r.t.\ the marking
addition. For our aims it is useful to look at congruences on
$\Mpp$ for finite sets $P$ in general; we do this in the next paragraph, using rather
symbols $r,s$ instead of $M$ for elements of $\Mpp$, to stress that
it is not important here whether or not $P$ is the set of places of a
Petri net.

\paragraph{Congruences on $\Mpp$.}
Given a finite set $P$ (which can be the set of places of a Petri
net), an equivalence relation $\rho$ on $\Mpp$ is a
\emph{congruence} if
\begin{center}
	$r_1\mathop{\rho} r_2$ implies $(r_1+s)\mathop{\rho}\,
	(r_2+s)$ for all $s\in\Mpp$.
\end{center}
(Hence $r_1\mathop{\rho}r_2$ and $r'_1\mathop{\rho}r'_2$ implies
$(r_1+r'_1)\mathop{\rho}\,(r_2+r'_1)$ and
$(r'_1+r_2)\mathop{\rho}\,(r'_2+r_2)$, and thus also
$(r_1+r'_1)\mathop{\rho}\,(r_2+r'_2)$.)

An important known fact is captured by the next proposition,
which says that each congruence on $\Mpp$ has a finite basis (by which
the congruence is generated); this was
an important ingredient in the decidability proof
in~\cite{Christensen1993}.
We can refer, e.g., to~\cite{Red65,Hirsh94}, but we provide
a self-contained proof for convenience.

\begin{proposition}\label{prop:finbasis}
Each congruence $\rho$ on $\Mpp$, where $P$ is a finite set, is
generated by a finite subset of $\rho$, in particular
	by the set $\rho_{\textsc{b}}$ (called a \emph{basis of}
	$\rho$) consisting of the minimal elements of the set
	$\{(r,s)\in\rho \mid r\neq s\}$ w.r.t.\ the partial
	order $\leq$, where we put $(r,s)\leq (r',s')$ if $r\leq r'$ and
	$s\leq s'$.
	(Finiteness of $\rho_{\textsc{b}}$ follows by Dickson's
	lemma.)
\end{proposition}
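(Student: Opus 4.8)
The plan is to show two things about the proposed basis $\rho_{\textsc{b}}$: first, that it is finite, and second, that it generates $\rho$ in the sense that $\rho$ is the smallest congruence containing $\rho_{\textsc{b}}$. Let me think about each part.

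For finiteness: $\rho_{\textsc{b}}$ consists of the minimal elements (w.r.t. the product order $\leq$ on pairs) of the set $\{(r,s) \in \rho : r \neq s\}$. Each pair $(r,s)$ lives in $\Mpp \times \Mpp \cong \Nat^{|P|} \times \Nat^{|P|} \cong \Nat^{2|P|}$. Dickson's lemma says any subset of $\Nat^k$ has finitely many minimal elements (its minimal elements form an antichain, and there are no infinite antichains in $\Nat^k$). So finiteness is immediate from Dickson's lemma — the paper even flags this. This is the easy part.

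For "generates": I need to define what generation means and then prove it. Let $\rho'$ be the smallest congruence containing $\rho_{\textsc{b}}$. Since $\rho_{\textsc{b}} \subseteq \rho$ and $\rho$ is a congruence, clearly $\rho' \subseteq \rho$. The real content is the reverse: $\rho \subseteq \rho'$, i.e., every pair $(r,s) \in \rho$ is derivable from the basis using reflexivity, symmetry, transitivity, and the additive congruence property (adding a common $s$).

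The natural approach for $\rho \subseteq \rho'$ is induction. I want to show every $(r,s) \in \rho$ with $r \neq s$ is in $\rho'$ (the $r = s$ case is reflexivity). I would induct on the pair $(r,s)$ using a well-founded order — here the cardinality-lexicographic order $\sqsubseteq$ defined earlier, or more simply induction on $|r| + |s|$, is the tool. Given $(r,s) \in \rho$ with $r \neq s$, either $(r,s)$ is itself minimal, hence in $\rho_{\textsc{b}} \subseteq \rho'$, and we are done; or it is not minimal, meaning there is a strictly smaller pair $(r_0, s_0) \in \rho$ with $r_0 \neq s_0$, $r_0 \leq r$, $s_0 \leq s$. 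The hope is to write $r = r_0 + u$ and $s = s_0 + v$ and combine $(r_0, s_0) \in \rho'$ (by induction) with something.

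The hard part — the main obstacle — is exactly this combination step. Knowing $(r_0, s_0) \in \rho'$ and adding a common multiset $w$ gives $(r_0 + w, s_0 + w) \in \rho'$, but $r = r_0 + u$ and $s = s_0 + v$ with possibly $u \neq v$, so I cannot directly add a common term to reach $(r,s)$. The fix is to use transitivity together with a second, strictly smaller pair in $\rho$. Concretely: from $(r_0, s_0) \in \rho$ and the congruence property of $\rho$ I get $(r_0 + u, s_0 + u) = (r, s_0 + u) \in \rho$; then the pair $(s_0 + u, s)$ must also be in $\rho$ by transitivity (since $(r, s_0+u), (r,s) \in \rho$), and if it is strictly smaller than $(r,s)$ I can apply the induction hypothesis to it and to $(r, s_0+u)$, then chain with transitivity in $\rho'$. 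Making the "strictly smaller" bookkeeping work — ensuring the auxiliary pairs genuinely drop in the well-founded measure and handling the case $s_0 + u = s$ or $= r$ separately — is the delicate combinatorial core; I expect this to be where the proof's real effort lies, and I would organize it as a clean induction on $|r|+|s|$ with an explicit minimal witness $(r_0,s_0) \leq (r,s)$ supplied by Dickson's lemma.
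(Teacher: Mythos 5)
Your skeleton matches the paper's proof almost exactly: finiteness from Dickson's lemma, and for generation a well-founded induction (the paper phrases it as a minimal counterexample w.r.t.\ the cardinality-lexicographic order $\sqsubseteq$) in which a basis element $(r_0,s_0)\leq(r,s)$ is substituted into one component, producing an auxiliary pair that is handled by transitivity together with the induction hypothesis. You have also correctly located where the real work is. But you stop exactly there, and the step you defer is the one that needs an actual idea, so as written the proof is incomplete. The missing ingredient is the \emph{orientation} of the substitution: you must first order the basis pair, w.l.o.g.\ $r_0\sqsubseteq s_0$ with $r_0\neq s_0$, and then replace the \emph{larger} component, i.e.\ pass from $(r,s)$ to the auxiliary pair $(r,\,r_0+(s-s_0))$ (equivalently $(r,\,s_0\text{-side replaced by }r_0\text{-side})$). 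Only with this orientation does the measure drop: the total cardinality of the auxiliary pair differs from $|r+s|$ by $|r_0|-|s_0|\leq 0$, so either it strictly decreases, or $|r_0|=|s_0|$ and one falls back on the lexicographic comparison. Without fixing the orientation, your auxiliary pair $(s_0+u,\,s)$ can be strictly \emph{larger} than $(r,s)$ and the induction fails.

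A second, related point: your suggestion that one could "more simply" induct on $|r|+|s|$ does not work. When $|r_0|=|s_0|$ (a case that certainly occurs), the substitution preserves the total cardinality, so $|r|+|s|$ is not a strictly decreasing measure; this is precisely why the paper sets up the cardinality-lexicographic order $\sqsubseteq$ beforehand and proves the strict decrease in two cases ($|r_0|<|s_0|$, or $|r_0|=|s_0|$ and $r_0\lelex s_0$). You should also make explicit that the witness below a non-minimal pair can be taken in $\rho_{\textsc{b}}$ itself (a minimal element exists below any element of $\{(r,s)\in\rho\mid r\neq s\}$ because $\leq$ on $\Nat^{2|P|}$ is well-founded), rather than appealing to the induction hypothesis for it. With the orientation fixed and the $\sqsubseteq$-measure used throughout, your argument closes and coincides with the paper's.
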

\begin{proof}
For the sake of contradiction, we assume that the least congruence
	$\rho'$ containing $\rho_{\textsc{b}}$ is a proper subset of~$\rho$.
	We choose a pair $(r_0,s_0)\in\rho\smallsetminus\rho'$
	such that $r_0+s_0$ is the least in the set
	$\{r+s\mid (r,s)\in\rho\smallsetminus\rho'\}$ w.r.t.\ the
	cardinality-lexicographic order $\sqsubseteq$ on $\Mpp$; we
	note that the pair $(s_0,r_0)$ also has this property.
	Since $(r_0,s_0)\in\rho\smallsetminus\rho'$, we have $r_0\neq
	s_0$ and $(r,s)\leq(r_0,s_0)$ for some
	$(r,s)\in\rho_{\textsc{b}}\subseteq\rho'$; w.l.o.g.\ we assume
	$r\sqsubseteq s$ (since otherwise we would consider the pairs
	 $(s,r)\leq(s_0,r_0)$, where $(s,r)\in\rho'$ and
	 $(s_0,r_0)\in\rho\smallsetminus\rho'$).

	We derive $(r+(s_0-s),s+(s_0-s))\in\rho'$, i.e.,
	$(r+(s_0-s),s_0)\in\rho'\subseteq\rho$; since
	$(r_0,s_0)\in\rho$, we deduce
	$(r_0,r+(s_0-s))\in\rho$.
	Since $r\sqsubseteq s$ and  $r\neq s$, we have either
	$|r|<|s|$, or $|r|=|s|$ and $r\lelex s$; this
	entails that either $|r_0+(r+(s_0-s))|<|r_0+s_0|$, or
 $|r_0+(r+(s_0-s))|=|r_0+s_0|$ and $r_0+(r+(s_0-s))\lelex r_0+s_0$.
Due to our choice of $(r_0,s_0)$ we deduce that
	$(r_0,r+(s_0-s))\in\rho'$; together with the above fact
	$(r+(s_0-s),s_0))\in\rho'$
	this entails $(r_0,s_0)\in\rho'$ -- a contradiction.
\end{proof}	

\noindent \textbf{Remark.} \\
We might also note that the above defined basis $\rho_{\textsc{b}}$ could be
made smaller by including just one of symmetric pairs $(r,s)$ and $(s,r)$.

\medskip
We note that bisimilarity, i.e.\ the equivalence $\sim$, in labeled Petri nets is not a congruence in
general (unlike the case of communication-free labeled Petri nets);
we can use the simple example in Figure~\ref{fig:rsim_sbisim}, where
 $\{X\} \sim \{Y\}$, but $(\{X\}+\{X\})\not \sim (\{Y\}+\{X\})$
 (since the firing $\{X,X\}\tu{t}\emptyset$, where $t$ is labeled with
 $b$, has no imitating firing from $\{X,Y\}$, where no $b$-labeled
 transition is enabled).

Instead of looking at subclasses where $\sim$ is a congruence, we will
look at natural equivalences refining $\sim$ that are congruences for
the whole class of labeled Petri nets. The finite-basis result
(Proposition~\ref{prop:finbasis}) might give some hope regarding the
decidability of such congruences.

\section{Resource similarity and resource bisimilarity}\label{sec:ressimbisim}

In Section~\ref{subsec:ressim} we look at the largest congruence
that refines bisimilarity (the equivalence $\sim$) in labeled Petri
nets. This relation is known as
resource similarity (denoted here by $\approx$), and it is also known
to be undecidable (which entails that its finite basis is not effectively
computable). In  Section~\ref{subsec:resbisim} we then recall another
congruence, known as resource bisimilarity (denoted here by $\simeq$).
In fact, it is the largest congruence refining $\sim$ that is a bisimulation.
In Section~\ref{sec:algor}
we show the decidability of resource bisimilarity (thus answering a question that was left open
in the literature); nevertheless, the effective computability of its finite
basis is not established by this result. (We add further comments in
Section~\ref{sec:conclusions}.)

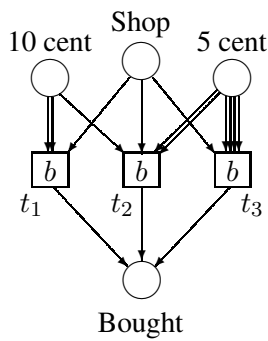
\begin{figure}[!b]
    \centering	
		\unitlength 0.9mm 
\linethickness{0.4pt}
\ifx\plotpoint\undefined\newsavebox{\plotpoint}\fi 
\begin{picture}(36.999,47.649)(0,0)
\put(7.324,45.126){\makebox(0,0)[cc]{10 cent}}
\put(20.673,47.649){\makebox(0,0)[cc]{Shop}}
\put(20.673,3.502){\makebox(0,0)[cc]{Bought}}
\put(34.022,45.126){\makebox(0,0)[cc]{5 cent}}
\put(7.358,39.732){\circle{5.482}}
\put(20.707,42.255){\circle{5.482}}
\put(20.812,10.091){\circle{5.482}}
\put(34.056,39.732){\circle{5.482}}
\put(4.835,23.755){\framebox(5.15,4.94)[cc]{$b$}}
\put(18.184,23.755){\framebox(5.15,4.94)[cc]{$b$}}
\put(31.534,23.755){\framebox(5.15,4.94)[cc]{$b$}}
\put(33.139,36.999){\vector(0,-1){8.094}}
\put(33.77,36.999){\vector(0,-1){8.094}}
\put(7.071,36.894){\vector(0,-1){8.094}}
\put(34.4,36.999){\vector(0,-1){8.094}}
\put(7.702,36.894){\vector(0,-1){8.094}}
\put(35.031,36.999){\vector(0,-1){8.094}}
\put(4.625,21.128){\makebox(0,0)[cc]{$t_1$}}
\put(17.974,21.128){\makebox(0,0)[cc]{$t_2$}}
\put(36.999,21.128){\makebox(0,0)[cc]{$t_3$}}
\put(20.812,39.312){\vector(0,-1){10.721}}
\put(20.812,23.65){\vector(0,-1){10.721}}
\put(9.986,28.906){\vector(-3,-4){.078}}\multiput(19.025,39.837)(-.03735387,-.045172121){242}{\line(0,-1){.045172121}}
\put(31.534,28.696){\vector(3,-4){.078}}\multiput(22.389,39.837)(.037478478,-.045663432){244}{\line(0,-1){.045663432}}
\put(18.289,28.696){\vector(1,-1){.078}}\multiput(8.829,37.315)(.041130803,-.037474732){230}{\line(1,0){.041130803}}
\put(22.599,28.801){\vector(-1,-1){.078}}\multiput(31.849,37.945)(-.037909265,-.037478478){244}{\line(-1,0){.037909265}}
\put(23.335,28.59){\vector(-1,-1){.078}}\multiput(32.269,37.63)(-.037382947,-.037822747){239}{\line(0,-1){.037822747}}
\put(18.71,12.193){\vector(1,-1){.078}}\multiput(7.883,23.755)(.0374620808,-.0400080474){289}{\line(0,-1){.0400080474}}
\put(22.704,12.088){\vector(-1,-1){.078}}\multiput(33.636,23.65)(-.0374371692,-.0395970059){292}{\line(0,-1){.0395970059}}
\end{picture}\vspace*{-3mm}
    \caption{Buying goods for 20 cents}
    \label{fig:buying}
\end{figure}

\subsection{Resource similarity}\label{subsec:ressim}

The relation of \emph{resource similarity}, introduced
in~\cite{Bashkin_FI2003} for labeled Petri nets,
is a congruent strengthening of bisimulation equivalence.
Intuitively, a resource in a Petri net can be viewed
as a part of its marking; formally it is also a multiset of places,
hence its definition does not differ from the definition of a marking.
The notions
``markings'' and ``resources'' are viewed as different
only due to their different interpretation in particular contexts. Resources
are parts of markings that may or
may not provide this or that kind of net behavior; \eg in the Petri net example
in Fig.~\ref{fig:buying} from \cite{10.1007/978-3-319-57861-3_3}, two ten-cent coins  form a resource
--- enough to buy an item of goods.

Two
resources are viewed as similar for a given labeled Petri net if replacing one of
them by another in any marking (i.e., in any context) does not change
the observable behavior of the net. In this sense, resource similarity is a
congruence that preserves visible behaviors up to bisimilarity.
Now we capture this description formally. We use symbols $r,s,w$ for
elements of $\Mpp$ (rather than $M$) to stress our ``resource
motivation'' here.

\begin{definition}
Let $N=(P,T,W,l)$ be a labeled Petri net. A \emph{resource} $r$ in $N$ is a multiset
	over the set $P$ of places; hence $r\in\Mpp$.
Two resources $r$ and $s$ in $N$ are called \emph{similar}, which is
	denoted by $r\approx s$, if for all
 $w \in\Mpp$ we have $(r+w)\sim (s+w)$, where $\sim$ is
	bisimulation equivalence.
	The relation $\approx$ is called \emph{resource similarity}.
\end{definition}

\begin{proposition}\label{prop:approxlargest}
	For any labeled Petri net $N=(P,T,W,l)$, the relation
	$\approx$ (resource similarity) is
	the largest congruence included in $\sim$ (bisimilarity).
\end{proposition}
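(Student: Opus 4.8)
The plan is to prove two inclusions: that $\approx$ is itself a congruence included in $\sim$, and that every congruence $\rho$ with $\rho \subseteq {\sim}$ satisfies $\rho \subseteq {\approx}$. Taken together, these say $\approx$ is the largest such congruence.

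First I would check that $\approx$ is an equivalence relation included in $\sim$. Reflexivity, symmetry, and transitivity all follow directly from the corresponding properties of $\sim$: for instance, transitivity of $\approx$ reduces to the fact that $(r+w)\sim(s+w)$ and $(s+w)\sim(u+w)$ give $(r+w)\sim(u+w)$ by transitivity of $\sim$. To see ${\approx}\subseteq{\sim}$, I would simply instantiate the defining condition of $r\approx s$ with $w=\emptyset$, yielding $r = r+\emptyset \sim s+\emptyset = s$. Next I would verify that $\approx$ is a congruence. Suppose $r\approx s$; I must show $(r+v)\approx(s+v)$ for every $v\in\Mpp$. Unfolding the definition, this requires $((r+v)+w)\sim((s+v)+w)$ for all $w$, which is exactly $(r+(v+w))\sim(s+(v+w))$; since $r\approx s$ holds for \emph{all} contexts, it holds in particular for the context $v+w$, so this is immediate. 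Thus $\approx$ is a congruence contained in $\sim$.

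For maximality, let $\rho$ be any congruence with $\rho\subseteq{\sim}$, and suppose $r\mathop{\rho}s$. I want to conclude $r\approx s$, i.e.\ $(r+w)\sim(s+w)$ for every $w\in\Mpp$. Because $\rho$ is a congruence, $r\mathop{\rho}s$ implies $(r+w)\mathop{\rho}(s+w)$ for each $w$. Applying the hypothesis $\rho\subseteq{\sim}$ then gives $(r+w)\sim(s+w)$, which is precisely the defining condition for $r\approx s$. Hence $\rho\subseteq{\approx}$, and combined with the first part this establishes that $\approx$ is the largest congruence included in $\sim$.

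I do not expect any genuine obstacle here: the argument is a routine unfolding of the definition of $\approx$ against the definition of congruence, and the only real content is recognizing that quantification over all contexts $w$ is exactly what makes $\approx$ both a congruence and maximal among congruences refining $\sim$. The single point that deserves care is the congruence step, where one must observe that the context $v+w$ ranges over all of $\Mpp$ as $w$ does, so that the condition defining $r\approx s$ is strong enough to yield $(r+v)\approx(s+v)$ without any further hypotheses.
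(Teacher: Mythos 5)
Your proposal is correct and follows essentially the same route as the paper's proof: establish that $\approx$ is an equivalence included in $\sim$ (via the context $w=\emptyset$), verify the congruence property by absorbing the added resource into the universally quantified context, and obtain maximality by pushing any congruence $\rho\subseteq{\sim}$ through an arbitrary context. No gaps.
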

\begin{proof}
It is straightforward to check that $\approx$ is an equivalence
	relation (reflexive, symmetric, and transitive),
since $\sim$ is an equivalence.
	(In particular, if $r_1\approx r_2$ and $r_2\approx r_3$, then
	for each $w\in\Mpp$ we have $(r_1+w)\sim (r_2+w)$ and
	$(r_2+w)\sim (r_3+w)$, and thus $(r_1+w)\sim (r_3+w)$; hence
	$r_1\approx r_3$.)
	Moreover, $\approx$
is a congruence w.r.t.\ multiset
	addition, i.e., $r\approx s$ implies $(r+w)\approx (s+w)$;
	indeed,
	for any $w'$ we have $((r+w)+w')\sim ((s+w)+w')$, since
	$r\approx s$ entails  $(r+(w+w'))\sim (s+(w+w'))$.
We trivially have
	$\approx\mathop{\subseteq}\sim$. Moreover, each congruence
	$\rho\mathop{\subseteq}\sim$ satisfies
	that $r\mathop{\rho}s$ implies
	$(r+w)\mathop{\rho}\,(s+w)$, and thus $(r+w)\sim (s+w)$, for all $w
	\in\Mpp$; this entails $\rho\mathop{\subseteq}\approx$.
\end{proof}

Since resource similarity is a congruence on $\Mpp$, it has
a~finite basis by Proposition~\ref{prop:finbasis}.
However, the place fusion  (studied in \cite{Pfister1995, Schnoebelen2000}) is a special case of resource similarity for resources of capacity one.
Place fusion has been proven to be undecidable, and this
implies undecidability of resource similarity; it also entails that
the finite basis of $\approx$ is not effectively computable.

\medskip\noindent
\textbf{Remark.}\\
The problem of deciding non-bisimilarity, i.e.\ the relation
$\not\sim$, is semidecidable
 (for labeled Petri nets); this follows from the
fact that the equivalences $\sim_i$ are decidable for all $i\in\Nat$,
and $\sim\mathop{=}\bigcap_{i\in\Nat}\sim_i$.
On the other hand, the halting problem for Minsky machines can be
reduced to deciding $\not\sim$, which entails that $\sim$ is
not semidecidable. Since $r\not\approx s$ means that $(r+w)\not\sim (s+w)$
for some $w$, we easily derive the semidecidability of
$\not\approx$. Moreover, the mentioned  reduction of the halting
problem to deciding $\not\sim$ can be adjusted so that it reduces to
deciding $\not\approx$. This entails that $\approx$ is not
semidecidable, and thus the finite basis of $\approx$ cannot be
effectively computable.

\medskip

We note that resource similarity is not a bisimulation in general;
the relation $\approx$ may not have the transfer property. This is
exemplified by the labeled Petri net in Figure~\ref{fig:rsim_rbisim}:
we shall later show that $\{X_1\}\approx\{Y_1\}$, but the firing
$\{X_1\}\tu{t_4}\{X_3\}$ has no ``imitating'' firing from $\{Y_1\}$;
indeed, both candidates $\{Y_1\}\tu{u_1}\emptyset$ and
$\{Y_1\}\tu{u_2}\{Y_2\}$ fail, since $\{X_3\}\not\approx \emptyset$
(because $(\{X_3\}+\{Z\})\not\sim_1 (\emptyset+\{Z\})$)
and  $\{X_3\}\not\approx \{Y_2\}$ (because $(\{X_3\}+\emptyset)\not\sim_1
(\{Y_2\}+\emptyset)$).

\subsection{Resource bisimilarity}\label{subsec:resbisim}

In view of the above facts, that the largest congruence $\approx$
refining $\sim$ is undecidable and is not a bisimulation, it is
natural to look at the largest congruence $\simeq$ refining $\sim$
that \emph{is} a bisimulation (whence we have
$\simeq\mathop{\subseteq}\approx\mathop{\subseteq}\sim$).
In fact, the relation $\simeq$, named \emph{resource bisimilarity},
was already defined in~\cite{Bashkin_FI2003}, though technically
slightly differently than we do this below. We also use a new term,
namely
\emph{the resource transfer property}; this property is stronger than
\emph{the transfer property} defined for relations on $\Mpp$ in
Section~\ref{sec:prelim}. An analogous notion was called \emph{the weak transfer property}
in~\cite{Bashkin_FI2003}, due to a different viewpoint (related to the
variant of the transfer property for place bisimulation defined in~\cite{Autant1992}).

\begin{definition}\label{def:resourcebis}
Let $N=(P,T,W,l)$ be a labeled Petri net.
We say that a
\emph{relation} $R\subseteq \Mpp\times\Mpp$ \emph{satisfies} the \emph{\it
resource transfer property w.r.t. a relation} $R'\subseteq \Mpp\times\Mpp$
if for each pair $(r,s)\in R$ and for each
	firing $(r+(\pre{t}-r))\tu{t}r'$ there exists an (``imitating'') firing
	$(s+(\pre{t}-r))\tu{u}s'$ such that $l(u)=l(t)$  and $(r',s')\in R'$.
We can illustrate this property by the following diagram.
$$r\qquad\quad R \qquad\quad s$$
	$$\qquad r+(\pre{t}-r)\qquad\qquad s+(\pre{t}-r)$$
$$\qquad\qquad\qquad\quad\quad\downarrow t\qquad\qquad\qquad\downarrow
	(\exists) u:\ l(u)=l(t)$$
$$r'\qquad\quad R' \qquad\quad s'$$

\medskip\noindent
(By our multiset definitions,
$r+(\pre{t}-r)=r\cup\pre{t}$.
We also note that
	$\pre{t}\leq r$ entails that $\pre{t}-r=\emptyset$, hence
	$r+(\pre{t}-r)=r$ and $s+(\pre{t}-r)=s$;
	therefore the resource transfer property
	implies the transfer property.)

\medskip
	By saying that $R$ \emph{has the resource transfer property} we mean that
$R$ satisfies the resource transfer property w.r.t. itself (hence $R'=R$ in the
diagram).

A relation $R\subseteq \Mpp\times\Mpp$ is a \emph{resource bisimulation} if
$R$ has the resource transfer property and also $R^{-1}$ has the resource transfer property.
	Resources (or markings) $r,s\in\Mpp$ are  \emph{resource
	bisimilar},
	written $r \simeq s$,
if there exists a resource bisimulation $R$ such that $(r,s)\in R$; the
relation $\simeq$, called \emph{resource bisimilarity},
 is thus the union of all resource bisimulations.
\end{definition}

\begin{example}
In the net in Figure~\ref{fig:buying} we can present
	any multiset $r$ over the set of places by
	the vector
	$(r(\texttt{10cent}),r(\texttt{Shop}),r(\texttt{5cent}),
	r(\texttt{Bought}))$.

	We can show that $r_0\simeq s_0$ where
	$r_0=(1,0,0,0)$ and $s_0=(0,0,2,0)$, by verifying that
the relation
	$\{\big((1,0,0,k),(0,0,2,k)\big)\mid k\geq 0)\}
	\cup \{\big((0,0,0,k),(0,0,0,k)\big) \mid k\geq 1\}$
	is a resource bisimulation.

	E.g.,  $\pre{t_1}-r_0=(1,1,0,0)$, and we have
	$r_0+(\pre{t_1}-r_0)=(2,1,0,0)\tu{t_1}(0,0,0,1)$.
	This firing can be imitated from $s_0+(\pre{t_1}-r_0)=(1,1,2,0)$
	by the firing $(1,1,2,0)\tu{t_2}(0,0,0,1)$ (since both $t_1$
	and $t_2$ have the same label $b$). It is a routine to finish
	this verification.
\end{example}

\begin{proposition}\label{prop:simeqlargest}
	For any labeled Petri net $N=(P,T,W,l)$, the relation $\simeq$
	(resource bisimilarity) is
	the largest congruence included in $\sim$ that is a
	bisimulation; hence we have
	$\simeq\mathop{\subseteq}\approx\mathop{\subseteq}\sim$\,.
\end{proposition}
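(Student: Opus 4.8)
The plan is to establish six facts about $\simeq$ in turn: that it is a resource bisimulation, that it is reflexive and symmetric, that it is a congruence, that it is transitive, that it is a bisimulation included in $\sim$, and finally that it dominates every congruence included in $\sim$ that is a bisimulation. The stated inclusions $\simeq\mathop{\subseteq}\approx\mathop{\subseteq}\sim$ then follow immediately from Proposition~\ref{prop:approxlargest}, since $\approx$ is the largest congruence included in $\sim$.

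First the routine half. An arbitrary union of resource bisimulations is again a resource bisimulation (an imitating firing witnessing a given pair can be found inside whichever resource bisimulation contains that pair), so $\simeq$ is itself a resource bisimulation, the largest one. Reflexivity holds because the identity relation is a resource bisimulation, and symmetry because $R$ is a resource bisimulation iff $R^{-1}$ is. Because the resource transfer property implies the ordinary transfer property (as noted after Definition~\ref{def:resourcebis}), every resource bisimulation is a bisimulation; hence $\simeq$ is a bisimulation, and therefore $\simeq\mathop{\subseteq}\sim$ since $\sim$ is the largest bisimulation.

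The technical core is congruence and transitivity. For congruence I would show that the additive closure $\hat R=\{(r+w,\,s+w)\mid r\simeq s,\ w\in\Mpp\}$ is again a resource bisimulation, whence $\hat R\mathop{\subseteq}\simeq$ and $r\simeq s$ gives $(r+w)\simeq(s+w)$. Given $(r+w,s+w)\in\hat R$ and a firing from $(r+w)+(\pre{t}-(r+w))$, the key observation is that this source equals $(r+(\pre{t}-r))+w'$ for $w'=\big(w+(\pre{t}-(r+w))\big)-(\pre{t}-r)$, which is nonnegative by a short componentwise check; so the firing is the canonical firing from $r+(\pre{t}-r)$ carrying the extra context $w'$, with target $r'+w'$. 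Applying the resource transfer property of $\simeq$ to $(r,s)$ yields an imitation from $s+(\pre{t}-r)$ with target $s'$ and $(r',s')\in\simeq$; since the required $s$-side source $(s+w)+(\pre{t}-(r+w))$ equals $(s+(\pre{t}-r))+w'$ for the same $w'$, adding context $w'$ gives an imitation with target $s'+w'$, and $(r'+w',\,s'+w')\in\hat R$. Transitivity is handled analogously by proving $\simeq\circ\simeq$ is a resource bisimulation: a firing on the left is matched through the middle resource to a firing from $s+(\pre{t}-r)$, which, although not in canonical form, can be written as the canonical firing from $s+(\pre{a}-s)$ plus a context $y=(\pre{t}-r)-(\pre{a}-s)\ge 0$, and this context is then pushed through the second $\simeq$-step; here the congruence property just established is exactly what lets me conclude the two final targets are related by $\simeq\circ\simeq$.

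For maximality, let $\rho$ be any congruence with $\rho\mathop{\subseteq}\sim$ that is a bisimulation; I claim $\rho$ is a resource bisimulation, which yields $\rho\mathop{\subseteq}\simeq$. Given $(r,s)\in\rho$ and a firing from $r+(\pre{t}-r)$, congruence gives $(r+(\pre{t}-r),\,s+(\pre{t}-r))\in\rho$, and the transfer property of the bisimulation $\rho$ then supplies an imitating firing from $s+(\pre{t}-r)$ landing in $\rho$, which is precisely the resource transfer property (and $\rho=\rho^{-1}$). Since $\simeq$ is itself a congruence included in $\sim$ that is a bisimulation, it is the largest such. I expect the main obstacle to be the congruence step: the delicate bookkeeping is that the $r$-deficiency $\pre{t}-r$ (rather than $\pre{t}-s$) is what is added on the $s$-side, which is exactly what forces the extra contexts on the two sides to coincide and makes the diagram close.
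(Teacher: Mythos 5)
Your proposal is correct and follows essentially the same route as the paper's proof: it establishes that $\simeq$ is the largest resource bisimulation, deduces $\simeq\mathop{\subseteq}\sim$ from the fact that the resource transfer property implies the transfer property, proves congruence by showing the additive closure $\{(r+w,s+w)\mid r\simeq s\}$ is a resource bisimulation (with the same decomposition $w+(\pre{t}-(r+w))=(\pre{t}-r)+w'$), derives transitivity from the congruence just established, and obtains maximality by observing that any congruence included in $\sim$ that is a bisimulation automatically satisfies the resource transfer property. The only cosmetic difference is in the transitivity step, where you re-decompose the middle firing into a canonical firing plus context while the paper simply notes that the relevant deficiency is empty; both variants rely on the same prior congruence fact.
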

\begin{proof}
	Since the relation $\simeq$ is the union of all resource bisimulations,
	i.e.
	\begin{center}
	$\simeq\mathop{=}\bigcup\,\{\,R\mathop{\subseteq}\Mpp\times\Mpp\mid
	R$ and $R^{-1}$ have the resource transfer property$\,\}$,
	\end{center}
	it is obvious
that also $\simeq$ and $\simeq^{-1}$ have the resource transfer
	property; hence $\simeq$
	is the largest
resource bisimulation (related to
the considered labeled Petri net $N=(P,T,W,l)$).
Since the resource transfer property is stronger
	than the transfer property (if $R$ has the resource transfer
	property, then $R$ also has the transfer property),
each resource bisimulation is a (standard) bisimulation; hence
$\simeq$ is a~bisimulation, and we have
	$\simeq\mathop{\subseteq}\sim$.

\medskip
The reflexivity and symmetry of $\simeq$ is straightforward;
the next two points show that $\simeq$ is a congruence:
	\begin{enumerate}
\itemsep=0.9pt
		\item
We prove that $r\simeq s$ implies  $(r+w)\simeq
			(s+w)$ by showing that the set
			\begin{center}
				$R=\{(r+w, s+w)\mid r,s,w\in\Mpp,
				r\simeq s\}$
			\end{center}
			is a resource	bisimulation (hence
			$R\mathop{=}\simeq$).
			To this aim we fix
some $(r+w, s+w)\in R$, where $r\simeq s$, and consider
			a firing
			$(r+w + (\pre{t}-(r+w)))\tu{t}r'$; we look
			for an imitating firing from
			$s+w +(\pre{t}-(r+w))$.
			We note that
			$w+(\pre{t}-(r+w))=(\pre{t}-r)+w'$ for a
			(maybe nonempty) multiset $w'\in\Mpp$;
			we thus have 	$(r+(\pre{t}-r)+w')\tu{t}r'$,
			and look for  an imitating firing from
			$s+(\pre{t}-r)+w'$.
			We have $(r + (\pre{t}-r))\tu{t}r''$, and
by $r\simeq s$ we deduce that there is
			an
			imitating firing $(s +
			(\pre{t}-r))\tu{t'}s''$, where
			$r''\simeq s''$.
			Since 	$(r+(\pre{t}-r)+w')\tu{t}r''+w'$ (for
			the above $r'$ we have $r'=r''+w'$), the
			firing 	$(s+(\pre{t}-r)+w')\tu{t'}s''+w'$
			satisfies our need: $(r''+w',s''+w')\in R$.
		\item
			Transitivity of $\simeq$
	(i.e., $r_1\simeq r_2$ and $r_2\simeq r_3$ entails $r_1\simeq r_3$)
is demonstrated by showing that the relation \vspace*{-2mm}
\begin{center}
$R=\{(r_1,r_3)\mid r_1\simeq r_2$ and $r_2\simeq r_3$ for some
	$r_2\in\Mpp\}$
			\end{center}
			is a resource bisimulation.
			Let us consider a pair $(r_1,r_3)\in R$, where
 $r_1\simeq r_2$ and $r_2\simeq r_3$, and
			a~firing
			$(r_1+(\pre{t_1}-r_1))\tu{t_1}r'_1$.	
			Since $r_1\simeq r_2$, there is an imitating
			firing 	$(r_2+(\pre{t_1}-r_1))\tu{t_2}r'_2$,
			where $r'_1\simeq r'_2$. By Point $1$,
			$r_2\simeq r_3$ entails
			$(r_2+(\pre{t_1}-r_1))\simeq
			(r_3+(\pre{t_1}-r_1))$, hence
			$(r_2+(\pre{t_1}-r_1))\tu{t_2}r'_2$
			(where
			$\pre{t_2}-(r_2+(\pre{t_1}-r_1))=\emptyset$)
			has an imitating firing
			$(r_3+(\pre{t_1}-r_1))\tu{t_3}r'_3$, where
			$r'_2\simeq r'_3$.
			Since $(r'_1,r'_3)\in R$, we are done.
	\end{enumerate}			
Finally we note that each congruence $\rho$ on $\Mpp$ that refines $\sim$
	(i.e., $\rho\mathop{\subseteq}\sim$)
	and is a (standard) bisimulation is also a resource
	bisimulation (since $r\mathop{\rho}s$ implies
	$(r+(\pre{t}-r))\mathop{\rho}\,(s+(\pre{t}-r))$); hence we
	have $\rho\mathop{\subseteq}\simeq$ for all such congruences.
\end{proof}

Similarly as for (standard) bisimilarity, we define the
stratified versions for resource bisimilarity; we also observe their  properties that underpin
our algorithm in Section~\ref{sec:algor}.

\begin{definition}[of equivalences $\simeq_i$ and equivalence-levels
	$\eqlev(r,s)$]
	Assuming a labeled Petri net $N=(P,T,W,l)$, we define the
	relations $\simeq_i$,  $i\in\Nat$, as follows:
\begin{itemize}
\itemsep=0.9pt
\item
	$\simeq_0\mathop{=}\Mpp\times\Mpp$;
\item
	$\simeq_{i+1}$ is the largest symmetric relation that satisfies
		the resource transfer property w.r.t. $\simeq_i$.
\end{itemize}
	For $r,s\in\Mpp$, by $\eqlev(r,s)$ we denote the largest $i$
	such that $r\simeq_i s$, stipulating  $\eqlev(r,s)=\omega$
	(where $i<\omega$ for all $i\in\Nat$)
	if $r\simeq_i s$ for all $i\in\Nat$.
\end{definition}

\begin{proposition}\label{prop:stratproperties}\hfill
	\begin{enumerate}	
\itemsep=0.9pt
		\item
For all $i\in\Nat$, $\simeq_i$ are congruences on $\Mpp$.
\item
$\simeq_0\mathop{\supseteq}\simeq_1\mathop{\supseteq}\simeq_2\mathop{\supseteq}\cdots$,
and
$\simeq\mathop{=}\bigcap_{i\in\Nat}\simeq_i$.
\item
	If $\eqlev(s,s')>\eqlev(r,s)$, then
			$\eqlev(r,s')=\eqlev(r,s)$.
	\end{enumerate}
\end{proposition}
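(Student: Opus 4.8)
The plan is to treat the three parts in order, exploiting the inductive definition of $\simeq_{i+1}$ as the largest symmetric relation having the resource transfer property w.r.t.\ $\simeq_i$, and reusing almost verbatim the congruence arguments from the proof of Proposition~\ref{prop:simeqlargest}. For Part~1 I would argue by induction on $i$ that each $\simeq_i$ is a congruence. The base case $\simeq_0\mathop{=}\Mpp\times\Mpp$ is immediate. For the step, assuming $\simeq_i$ is a congruence, I check the four requirements for $\simeq_{i+1}$. Reflexivity holds because the identity relation is symmetric and, by reflexivity of $\simeq_i$, has the resource transfer property w.r.t.\ $\simeq_i$, so it is contained in the largest such relation $\simeq_{i+1}$; symmetry is built into the definition. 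Compatibility with addition is shown as in Point~1 of the proof of Proposition~\ref{prop:simeqlargest}: the relation $R=\{(r+w,s+w)\mid r\simeq_{i+1}s\}$ is symmetric and has the resource transfer property \emph{w.r.t.\ $\simeq_i$}, whence maximality gives $R\subseteq\simeq_{i+1}$. The only new point is that, after obtaining an imitating firing with $(r'',s'')\in\simeq_i$, I invoke the induction hypothesis (that $\simeq_i$ is a congruence) to conclude $(r''+w',s''+w')\in\simeq_i$. Transitivity then follows as in Point~2 of that proof, using the compatibility of $\simeq_{i+1}$ just established together with transitivity of $\simeq_i$; it is essential to prove compatibility \emph{before} transitivity, so that no circularity arises.

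For Part~2, the descending chain $\simeq_{i+1}\subseteq\simeq_i$ comes from monotonicity of the operator sending a relation $R$ to the largest symmetric relation with the resource transfer property w.r.t.\ $R$: enlarging the target only relaxes the property, so the trivial inclusion $\simeq_1\subseteq\simeq_0$ propagates to all levels. The inclusion $\simeq\subseteq\bigcap_i\simeq_i$ is a routine induction, using that $\simeq$ is a symmetric resource bisimulation and that $\simeq\subseteq\simeq_i$ turns ``resource transfer w.r.t.\ $\simeq$'' into ``resource transfer w.r.t.\ $\simeq_i$''. The reverse inclusion $\bigcap_i\simeq_i\subseteq\simeq$ is the crux and the step I expect to be the main obstacle, since it requires choosing a \emph{single} imitating firing that works simultaneously at every level. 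I would set $R=\bigcap_i\simeq_i$ and show it is a resource bisimulation. Given $(r,s)\in R$ and a firing $(r+(\pre{t}-r))\tu{t}r'$, for each $i$ the membership $(r,s)\in\simeq_{i+1}$ yields an imitating firing $(s+(\pre{t}-r))\tu{u}s'$ with $r'\simeq_i s'$; but the set of candidate results $s'$ reachable from $s+(\pre{t}-r)$ by a firing labeled $l(t)$ is finite and independent of $i$ (image-finiteness, as already used for $\sim\mathop{=}\bigcap_i\sim_i$). By the pigeonhole principle some fixed $s'$ satisfies $r'\simeq_i s'$ for infinitely many $i$, and by the descending chain this forces $r'\simeq_i s'$ for \emph{all} $i$, i.e.\ $(r',s')\in R$. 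Since $R$ is symmetric ($R^{-1}=R$), $R$ is a resource bisimulation and hence $R\subseteq\simeq$.

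Part~3 is then a direct corollary of Parts~1 and~2. I would set $k=\eqlev(r,s)$, noting that $k$ is finite since otherwise the hypothesis $\eqlev(s,s')>k$ is vacuous. From $k=\eqlev(r,s)$ we have $r\simeq_k s$ and $r\not\simeq_{k+1}s$, and from $\eqlev(s,s')>k$ we have $s\simeq_{k+1}s'$, hence $s\simeq_k s'$ by the descending chain. Because $\simeq_k$ is an equivalence (Part~1), transitivity gives $r\simeq_k s'$, so $\eqlev(r,s')\geq k$. Were $r\simeq_{k+1}s'$ to hold, then combining it with $s'\simeq_{k+1}s$ through transitivity of $\simeq_{k+1}$ would give $r\simeq_{k+1}s$, contradicting $\eqlev(r,s)=k$; therefore $\eqlev(r,s')=k$, as claimed.
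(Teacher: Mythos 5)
Your proposal is correct and follows essentially the same route as the paper: Part~1 by induction on $i$ adapting the congruence arguments of Proposition~\ref{prop:simeqlargest}, Part~2 via image-finiteness and a pigeonhole argument showing $\bigcap_{i}\simeq_i$ is itself a resource bisimulation, and Part~3 by exactly the transitivity argument the paper gives. The extra detail you supply for Part~1 (maximality of $\simeq_{i+1}$, proving compatibility before transitivity, invoking the induction hypothesis that $\simeq_i$ is a congruence) just fills in what the paper leaves as ``analogous''.
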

\begin{proof}
\indent
	1) We can proceed by induction on $i$, and analogously as in
	the proof of Proposition~\ref{prop:simeqlargest}.

\smallskip
	2) The fact $\simeq_i\mathop{\supseteq}\simeq_{i+1}$ and the inclusion
	$\simeq\mathop{\subseteq}\,(\,\bigcap_{i\in\Nat}\simeq_i)$ are
	obvious. Now we observe that the relation $\bigcap_{i\in\Nat}\simeq_i$ is a resource
	bisimulation (which entails $(\bigcap_{i\in\Nat}\simeq_i)\mathop{\subseteq}\simeq$);  here we use image-finiteness, i.e., the fact
	that each firing has only finitely many candidates for imitating
	firings (with the same action-label).
	Hence if  $(r,s)\in\bigcap_{i\in\Nat}\simeq_i$, and
	$(r+(\pre{t}-r))\tu{t}r'$, then there is an imitating firing
	$(s+(\pre{t}-r))\tu{t'}s'$ (with $l(t')=l(t)$) such that
	$r'\simeq_i s'$ for infinitely many $i\in\Nat$; but this
	entails that $(r',s')\in\bigcap_{i\in\Nat}\simeq_i$.

\smallskip
	3) Let  $\eqlev(s,s')>\eqlev(r,s)=i$; hence $r\simeq_i s$,
	$r\not\simeq_{i+1} s$, and $s\simeq_{i+1}s'$ (and $s\simeq_{i}s'$).
	Since $\simeq_i$ and $\simeq_{i+1}$ are equivalences, we have
	$r\simeq_i s'$ and  $r\not\simeq_{i+1}s'$ (since
	$r\simeq_{i+1}s'$ would entail $r\simeq_{i+1}s$).
\end{proof}

The next theorem summarizes the previously derived facts on the
equivalences $\simeq$ (resource bisimilarity), $\approx$ (resource
similarity), $\sim$ (bisimilarity), and adds that these
equivalences really differ
in general.

	\begin{theorem}[Mutual relations of
		resource bisimilarity, resource similarity, and
		bisimilarity]\label{th:rsim_rbisim}
\begin{enumerate}
\itemsep=0.9pt
\item For each labeled
	Petri net we have
				$\simeq\mathop{\subseteq}\approx\mathop{\subseteq}\sim$;
				moreover, $\approx$ is the largest
				congruence included in $\sim$, and
$\simeq$ is the largest			congruence included in $\sim$
that is a bisimulation.
		\item There exists a labeled Petri net for which
			$\approx\mathop{\neq} \sim$.
		\item There exists a labeled Petri net for which
			$\simeq\mathop{\neq}\approx$.
		\item For each labeled communication-free Petri net we
			have $\simeq\mathop{=}\approx\mathop{=}\sim$.
		\end{enumerate}
\end{theorem}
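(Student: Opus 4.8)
The plan is to obtain Parts~1, 2 and~4 cheaply from results already in hand and to concentrate the real effort on Part~3. Part~1 requires nothing new: the inclusion chain $\simeq\mathop{\subseteq}\approx\mathop{\subseteq}\sim$ together with the two maximality characterizations is exactly what Propositions~\ref{prop:approxlargest} and~\ref{prop:simeqlargest} assert, so it can be cited directly. For Part~2 I would reuse the net of Figure~\ref{fig:rsim_sbisim}, where $\{X\}\sim\{Y\}$ but $(\{X\}+\{X\})\not\sim(\{Y\}+\{X\})$. The second fact gives $\{X\}\not\approx\{Y\}$ while the first gives $\{X\}\sim\{Y\}$, so $\approx\mathop{\neq}\sim$ for this net.

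Part~4 I would argue abstractly rather than by construction. In a communication-free net every transition has $|\pre{t}|\leq 1$, and (as recorded in the preliminaries) this forces $\sim$ to be a congruence. Since $\sim$ is always a bisimulation, it is then a congruence included in $\sim$ that is a bisimulation, so the maximality in Proposition~\ref{prop:simeqlargest} yields $\sim\mathop{\subseteq}\simeq$; combined with the always-valid $\simeq\mathop{\subseteq}\sim$ this gives $\simeq\mathop{=}\sim$. The chain $\simeq\mathop{\subseteq}\approx\mathop{\subseteq}\sim$ from Part~1 then collapses to $\simeq\mathop{=}\approx\mathop{=}\sim$.

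The heart of the theorem is Part~3, and here I expect the work. I would exhibit the net of Figure~\ref{fig:rsim_rbisim} and show that its resources satisfy $\{X_1\}\approx\{Y_1\}$ yet $\{X_1\}\not\simeq\{Y_1\}$. The non-bisimilarity is the easier half: since $\pre{t_4}\leq\{X_1\}$, the firing $\{X_1\}\tu{t_4}\{X_3\}$ must be matched, under the resource transfer property, by an equally labeled firing directly from $\{Y_1\}$, and the only candidates are $\{Y_1\}\tu{u_1}\emptyset$ and $\{Y_1\}\tu{u_2}\{Y_2\}$. Because $\simeq\mathop{\subseteq}\approx$, it suffices to rule both out already at the level of $\approx$: one checks $(\{X_3\}+\{Z\})\not\sim_1(\emptyset+\{Z\})$ and $(\{X_3\}+\emptyset)\not\sim_1(\{Y_2\}+\emptyset)$, so that $\{X_3\}\not\approx\emptyset$ and $\{X_3\}\not\approx\{Y_2\}$. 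Hence no relation containing the pair $(\{X_1\},\{Y_1\})$ can have the resource transfer property, and $\{X_1\}\not\simeq\{Y_1\}$.

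The laborious half is the similarity $\{X_1\}\approx\{Y_1\}$, i.e.\ $(\{X_1\}+w)\sim(\{Y_1\}+w)$ for \emph{every} context $w\in\Mpp$. I would prove this by exhibiting a single relation $R\subseteq\Mpp\times\Mpp$ that contains all pairs $(\{X_1\}+w,\{Y_1\}+w)$ and is a bisimulation, i.e.\ both $R$ and $R^{-1}$ have the transfer property. Designing $R$ is the main obstacle: the imitation strategy must be uniform in $w$, and it has to explain how the move $(\{X_1\}+w)\tu{t_4}(\{X_3\}+w)$ is answered from $\{Y_1\}+w$ by $u_1$ or $u_2$, where which of the two succeeds depends on the tokens present in $w$ (in particular on place $Z$). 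This context-dependence is precisely what separates $\approx$ from $\simeq$: within a fixed context the match can draw on a token of $w$, whereas the context-free resource transfer property demands a single uniform match from the bare resource, which does not exist. Once the reachable markings from $\{X_1\}+w$ and $\{Y_1\}+w$ are catalogued and paired so that $R$ is closed, verifying the transfer property reduces to a finite, if tedious, case check over the transitions of the net.
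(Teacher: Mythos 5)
Your Parts~1, 2 and~4 match the paper's proof: Part~1 is a direct citation of Propositions~\ref{prop:approxlargest} and~\ref{prop:simeqlargest}, Part~2 uses the net of Figure~\ref{fig:rsim_sbisim} with the same witness $\{X,X\}\not\sim_1\{X,Y\}$, and Part~4 rests on $\sim$ being a congruence for communication-free nets (your maximality argument via Proposition~\ref{prop:simeqlargest} is a correct way of spelling out what the paper leaves implicit). Your argument that $\{X_1\}\not\simeq\{Y_1\}$ in Figure~\ref{fig:rsim_rbisim} is also exactly the paper's: the firing $\{X_1\}\tu{t_4}\{X_3\}$ admits no match from $\{Y_1\}$ because $\{X_3\}\not\approx\emptyset$ and $\{X_3\}\not\approx\{Y_2\}$, both witnessed already at level $\sim_1$.

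The gap is in the other half of Part~3. You correctly state that one must exhibit a bisimulation $R$ containing every pair $(\{X_1\}+w,\{Y_1\}+w)$, and you correctly diagnose the difficulty (the choice between $u_1$ and $u_2$ when answering $t_4$ depends on the context, in particular on the tokens in $Z$), but you never produce $R$; you only assert that cataloguing reachable markings and closing the relation reduces to a finite, tedious check. That assertion conceals the one genuinely non-routine idea in the whole proof. The paper's $R$ consists of the pairs $(r,s)$ that either differ by exactly one token traded between $X_1$ and $Y_1$ (agreeing elsewhere), or agree on $X_1$ and $Y_1$ and satisfy the invariant $r(X_2)+\min\{r(X_3),r(Z)\}+r(Y_2)=s(X_2)+\min\{s(X_3),s(Z)\}+s(Y_2)$. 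The $\min$ term is the point: a token on $X_3$ yields a future $b$-action only if a $Z$-token is available to pair with it (transition $t_5$ consumes both), so the quantity to preserve is the number of \emph{guaranteed} $b$'s; with this invariant the $t_4$ case resolves (match by $t_2$ or $u_2$ when $r(X_3)<r(Z)$, by $t_1$ or $u_1$ otherwise) and closure of $R$ is immediate. Without this invariant, or some equivalent closed-form description of $R$, the claim $\{X_1\}\approx\{Y_1\}$ --- and hence Part~3 --- is not proved.
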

	
	\begin{proof}
\indent
		Point $1$ summarizes
		Propositions~\ref{prop:approxlargest}
		and~\ref{prop:simeqlargest}.

		Point  $4$ follows by the fact that $\sim$ is a congruence for
		labeled communication-free Petri nets.

\begin{figure}[ht]
    \centering	
		\unitlength 0.9mm 
\linethickness{0.4pt}
\ifx\plotpoint\undefined\newsavebox{\plotpoint}\fi 
\scalebox{1.05}{\begin{picture}(57.504,8.733)(0,0)
\put(14.926,5.992){\circle{5.482}}
\put(54.763,5.992){\circle{5.482}}
\put(1.051,3.679){\framebox(5.15,4.94)[cc]{$a$}}
\put(40.888,3.679){\framebox(5.15,4.94)[cc]{$a$}}
\put(23.755,3.679){\framebox(5.15,4.94)[cc]{$b$}}
\put(11.983,5.991){\vector(-1,0){5.676}}
\put(51.82,5.991){\vector(-1,0){5.676}}
\put(17.659,6.622){\vector(1,0){5.781}}
\put(17.764,5.676){\vector(1,0){5.676}}
\put(15.031,.631){\makebox(0,0)[cc]{$X$}}
\put(54.763,.631){\makebox(0,0)[cc]{$Y$}}
\end{picture} }
    \caption{Resource similarity does not coincide with bisimilarity:
	$\{X\} \sim \{Y\}$, but $\{X\}\not \approx \{Y\}$.}
    \label{fig:rsim_sbisim}\vspace*{-2mm}
\end{figure}
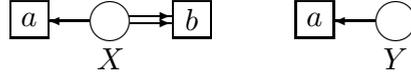

Point $2$ is shown by the example depicted in
		Fig.~\ref{fig:rsim_sbisim}.
We have 	$\{X\} \sim \{Y\}$, since the relation
		$\{ (\{X\},\{Y\}), (\emptyset,\emptyset)\}$ is a
		bisimulation; but 	$\{X\} \not\approx \{Y\}$,
		since $(\{X\}+\{X\}) \not\sim (\{Y\}+\{X\})$
		(we even have $\{X,X\} \not\sim_1 \{X,Y\}$ due to the
		action $b$ that is enabled just in one of these
		multisets).

\begin{figure}[!ht]
\vspace*{4mm}
    \centering	
\unitlength .9mm 
\linethickness{0.4pt}
\ifx\plotpoint\undefined\newsavebox{\plotpoint}\fi 
\scalebox{1.05}{\begin{picture}(77.783,55.507)(0,0)
\put(31.639,32.585){\circle{5.482}}
\put(17.659,32.585){\circle{5.482}}
\put(17.659,52.346){\circle{5.482}}
\put(42.15,29.011){\circle{5.482}}
\put(28.906,20.287){\framebox(5.15,4.94)[cc]{$b$}}
\put(14.926,20.287){\framebox(5.15,4.94)[cc]{$b$}}
\put(28.906,40.048){\framebox(5.15,4.94)[cc]{$a$}}
\put(14.926,40.048){\framebox(5.15,4.94)[cc]{$a$}}
\put(.946,40.048){\framebox(5.15,4.94)[cc]{$a$}}
\put(31.639,29.641){\vector(0,-1){4.31}}
\put(17.659,29.641){\vector(0,-1){4.31}}
\put(17.659,49.402){\vector(0,-1){4.31}}
\put(31.639,39.837){\vector(0,-1){4.31}}
\put(17.659,39.837){\vector(0,-1){4.31}}
\put(26.593,30.693){\makebox(0,0)[cc]{$X_3$}}
\put(12.613,30.693){\makebox(0,0)[cc]{$X_2$}}
\put(11.983,53.502){\makebox(0,0)[cc]{$X_1$}}
\put(41.94,34.582){\makebox(0,0)[cc]{$Z$}}
\put(6.096,44.988){\vector(-2,-1){.078}}\multiput(15.557,50.454)(-.06480137,-.037438356){146}{\line(-1,0){.06480137}}
\put(28.695,45.093){\vector(3,-2){.078}}\multiput(19.866,50.664)(.059255034,-.037389262){149}{\line(1,0){.059255034}}
\put(34.161,24.071){\vector(-2,-1){.078}}\multiput(39.942,27.119)(-.0705,-.03717073){82}{\line(-1,0){.0705}}
\put(3.574,37.209){\makebox(0,0)[cc]{$t_1$}}
\put(17.974,17.659){\makebox(0,0)[cc]{$t_3$}}
\put(31.954,17.554){\makebox(0,0)[cc]{$t_5$}}
\put(22.179,37.315){\makebox(0,0)[cc]{$t_2$}}
\put(36.684,37.42){\makebox(0,0)[cc]{$t_4$}}
\put(73.263,33.005){\circle{5.482}}
\put(73.263,52.766){\circle{5.482}}
\put(70.53,20.707){\framebox(5.15,4.94)[cc]{$b$}}
\put(70.53,40.468){\framebox(5.15,4.94)[cc]{$a$}}
\put(56.55,40.468){\framebox(5.15,4.94)[cc]{$a$}}
\put(73.263,30.061){\vector(0,-1){4.31}}
\put(73.263,49.822){\vector(0,-1){4.31}}
\put(73.263,40.257){\vector(0,-1){4.31}}
\put(68.217,31.113){\makebox(0,0)[cc]{$Y_2$}}
\put(67.587,53.922){\makebox(0,0)[cc]{$Y_1$}}
\put(61.7,45.408){\vector(-2,-1){.078}}\multiput(71.161,50.874)(-.06480137,-.037438356){146}{\line(-1,0){.06480137}}
\put(59.178,37.629){\makebox(0,0)[cc]{$u_1$}}
\put(73.578,18.079){\makebox(0,0)[cc]{$u_3$}}
\put(77.783,37.735){\makebox(0,0)[cc]{$u_2$}}
\end{picture} }\vspace*{-14mm}
    \caption{Resource similarity does not coincide with resource
	bisimilarity: $\{X_1\} \approx \{Y_1\}$, but $\{X_1\} \not
	\simeq \{Y_1\}$.}
    \label{fig:rsim_rbisim}
\end{figure}
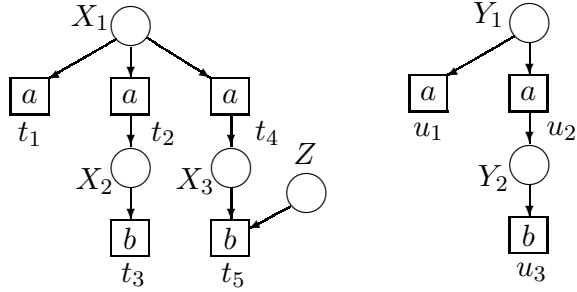

\medskip
		Point $3$ is shown by the example depicted in
		Fig.~\ref{fig:rsim_rbisim}.
At the end of Section~\ref{subsec:ressim} we have, in fact, already
		shown that  $\{X_1\}\not\simeq \{Y_1\}$: since the
		pair $(\{X_1\},\{Y_1\})$ (viewed as a relation with a
		single element)
		does not satisfy the transfer
		property w.r.t.\ $\approx$, it surely does not
		satisfy the resource transfer property w.r.t.\ $\simeq$
		(since 	$\simeq\mathop{\subseteq}\approx$, and the resource transfer property is stronger than
		the transfer property).
It thus remains to show that $\{X_1\}\approx \{Y_1\}$, i.e. that
		\begin{center}
		$(\{X_1\}+w)\sim(\{Y_1\}+w)$ for each multiset $w$
		over the set $P=\{X_1,X_2,X_3,Y_1,Y_2,Z\}$.
		\end{center}
We show this by verifying that the following relation $R$ on $\Mpp$ is
a bisimulation; $R$ contains the pairs $(r,s)$ for which one of the
following conditions is satisfied
(where the condition $1$ guarantees that $(\{X_1\}+w,\{Y_1\}+w)\in R$):
\begin{enumerate}
	\item $r(X_1)-s(X_1)=1$, $s(Y_1)-r(Y_1)=1$, and $r,s$
		coincide on the set $\{X_2,X_3,Y_2,Z\}$;
	\item $r(X_1)=s(X_1)$, $r(Y_1)=s(Y_1)$,
		and
\\
$r(X_2)+\min\{r(X_3),r(Z)\}+r(Y_2)=s(X_2)+\min\{s(X_3),s(Z)\}+s(Y_2)$.
\end{enumerate}		
It is a routine to check that both $R$ and $R^{-1}$ have the transfer
property. The only interesting cases are the firings from one side
of $(r,s)\in R$ that cannot be matched (``imitated'') by firing the
same transition from the other side. Such cases are described in what
follows:
\begin{itemize}
\itemsep=0.9pt
	\item
$(r,s)$ satisfies $2$:
\\
Here any $b$-transition (i.e., $t_3,t_5,u_3$) fired from $r$
		(or from $s$) can be matched by firing any $b$-transition
		from $s$ (or from $r$, respectively); the resulting
		pair $(r',s')$ obviously
		satisfies $2$ as well.

		For the firing $r\tu{t_4}r'$ ($t_4$ is an $a$-transition) we have
		\begin{itemize}
			\item either
		$\min\{r'(X_3),r'(Z)\}=\min\{r(X_3),r(Z)\}+1$
		(when $r(X_3)<r(Z)$),
				in which case the firing $r\tu{t_4}r'$ is matched by
		$s\tu{t_2}s'$,
	\item
		or $\min\{r'(X_3),r'(Z)\}=\min\{r(X_3),r(Z))\}$
	(when $r(X_3)\geq r(Z)$), in which case the firing
				$r\tu{t_4}r'$
				is matched by
		$s\tu{t_1}s'$.
		\end{itemize}
In both cases $(r',s')$ satisfies $2$.
		The firing $s\tu{t_4}s'$ is
		matched analogously.
		\item
			$(r,s)$ satisfies $1$ and $r(Y_1)=0$ or
			$s(X_1)=0$:		
\\
		If $r(Y_1)=0$ (hence $s(Y_1)=1$),
		then the firing $s\tu{u_1}s'$ (or
		$s\tu{u_2}s'$)
		is
		matched by  $r\tu{t_1}r'$ (or
		$r\tu{t_2}r'$, respectively); in both cases
		$(r',s')$ satisfies $2$.

If $r(X_1)=1$ (hence $s(X_1)=0$),
		then the firing $r\tu{t_1}r'$ (or
		$r\tu{t_2}r'$)
		is
		matched by  $s\tu{u_1}s'$ (or
		$s\tu{u_2}s'$, respectively).

		The firing $r\tu{t_4}r'$ is matched similarly as
		discussed above:
		\begin{itemize}
			\item
				if
		$\min\{r'(X_3),r'(Z)\}=\min\{r(X_3),r(Z)\}+1$,
				then
				 $r\tu{t_4}r'$ is matched by
		$s\tu{u_2}s'$;
	\item	if $\min\{r'(X_3),r'(Z)\}=\min\{r(X_3),r(Z)\}$,
		then
$r\tu{t_4}r'$
				is matched by
		$s\tu{u_1}s'$.
		\end{itemize}
		In both cases
		$(r',s')$ satisfies $2$.
\end{itemize}

\vspace*{-7mm}
\end{proof}

As we already mentioned, bisimilarity (the relation $\sim$)
and resource similarity (the relation $\approx$) are undecidable.
In the next section (Section~\ref{sec:algor})
we show that resource bisimilarity  (the relation $\simeq$) is decidable.
Here we still show that the decidability does not immediately follow from the fact
that $\simeq=\bigcap_{i\in\Nat}\simeq_i$
(where
$\simeq_0\mathop{\supseteq}\simeq_1\mathop{\supseteq}\simeq_2\mathop{\supseteq}
\cdots$), even though all $\simeq_i$, as well as $\simeq$, are
congruences and thus have finite bases (by
Proposition~\ref{prop:finbasis}). We could even easily derive that
finite bases for $\simeq_i$, $i\in\Nat$, are effectively computable;
nevertheless, it is not immediately clear how they ``converge''
to a~finite basis for $\sim$. The following simple example aims to
illustrate this, even for the case of a labeled communication-free Petri net
(where $\simeq$ coincides with $\sim$).

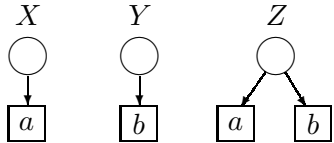
\begin{figure}[h]
\vspace*{4mm}
    \centering	
		\unitlength 0.9mm 
\linethickness{0.4pt}
\ifx\plotpoint\undefined\newsavebox{\plotpoint}\fi 
\begin{picture}(47.616,18.5)(0,0)
\put(3.258,12.403){\circle{5.482}}
\put(19.656,12.403){\circle{5.482}}
\put(39.627,12.403){\circle{5.482}}
\put(.526,.105){\framebox(5.15,4.94)[cc]{$a$}}
\put(16.923,.105){\framebox(5.15,4.94)[cc]{$b$}}
\put(31.113,.105){\framebox(5.15,4.94)[cc]{$a$}}
\put(42.465,.105){\framebox(5.15,4.94)[cc]{$b$}}
\put(3.258,9.46){\vector(0,-1){4.31}}
\put(19.656,9.46){\vector(0,-1){4.31}}
\put(3.364,18.5){\makebox(0,0)[cc]{$X$}}
\put(19.761,18.5){\makebox(0,0)[cc]{$Y$}}
\put(39.732,18.5){\makebox(0,0)[cc]{$Z$}}
\put(34.582,5.045){\vector(-2,-3){.078}}\multiput(38.05,10.091)(-.037297656,-.054251136){93}{\line(0,-1){.054251136}}
\put(44.042,5.045){\vector(2,-3){.078}}\multiput(40.994,10.091)(.03717361,-.06152873){82}{\line(0,-1){.06152873}}
\end{picture}
    \caption{A labeled communication-free Petri net}
    \label{fig:sequence}\vspace*{-3mm}
\end{figure}

\begin{example}
Let us consider the labeled
	Petri net in Fig.~\ref{fig:sequence}; it is a
	communication-free net, since the preset of
	each transition is a (multi)set containing a single place.
	Here bisimilarity (the relation $\sim$) is a congruence
	(thus coinciding with $\simeq$), and it is not difficult to
	derive that $\sim$ is the identity relation (we can thus take
	the empty set as its finite basis).

\medskip
	We now look at finite bases  of congruences $\simeq_i$. To
	this aim we present multisets $r$ over $\{X,Y,Z\}$ as the
	vectors $(r(X),r(Y),r(Z))$.
	For $\simeq_0$, a basis is the set
	\begin{center}
$\{\big((0,0,0),(0,0,1)\big), \big((0,0,0),(0,1,0)\big),
		\big((0,0,0),(1,0,0)\big)\}$
	\end{center}
	(where we do not include the symmetric pairs).	

\medskip
For $\simeq_1$, a basis
contains elements like
	\begin{center}
$\big((0,0,1),(0,0,2)\big), \big((0,0,1),(0,1,1)\big),
 \big((0,0,1),(1,0,1)\big), \big((0,0,1),(1,1,0)\big), \dots$
	\end{center}

For $\simeq_2$, a basis can not contain, e.g.,
	$\big((0,0,1),(0,0,2)\big)$, but
``instead''
it contains elements like $\big((0,0,2),(0,0,3)\big)$,
	$\big((1,1,1),(1,1,2)\big)$,
$\dots$.

\medskip
It seems to be a subtle problem in general, to derive a basis for
	$\simeq$ by constructing a row of bases for  $\simeq_1$,
	$\simeq_2$, $\dots$.
	Nevertheless, for labeled \emph{communication-free} Petri nets, a finite basis
	for $\sim$ (coinciding with $\simeq$) can be effectively
	computed; further remarks about this are in
	Section~\ref{sec:conclusions}.
\end{example}

\section{Algorithm for checking resource bisimilarity}\label{sec:algor}

\noindent \textbf{The problem: }\\
Given a labeled Petri net $N=(P,T,W,l)$ and two resources $r_0,s_0 \in
\Mpp$, we need to check whether they are resource bisimilar, i.e. whether $r_0 \simeq s_0$.
We assume that $P\neq\emptyset$ and $T\neq\emptyset$ (otherwise the
problem is trivial).

\medskip
The algorithm ALG we present here to solve this problem uses
the well-known tableau technique (see \eg \cite{Aceto2011}), adapted
to the resource bisimilarity relation and its resource transfer property.

\medskip

Below we give a pseudocode of ALG (with some
comments inside \algcom{}).
In its computation, ALG will be also comparing the cardinalities $|r|$, $|s|$ of
multisets $r,s\in\Mpp$, and in the case $|r|=|s|$ it will use the
\emph{lexicographic order} $r\leqlex s$.
In other words, it will use the cardinality-lexicographic order
$\sqsubseteq$ that was defined in Section~\ref{sec:prelim} for
multisets.
Besides that, ALG will also use the standard component-wise
order $r\leq s$, extended to the pairs of multisets as in
Proposition~\ref{prop:finbasis}: we put $(r,s)\leq (r',s')$
if $r\leq r'$ and $s\leq s'$.
\begin{quote}
	\textbf{Nondeterministic algorithm ALG deciding resource bisimilarity}

\emph{Input:}
A labeled Petri net $N=(P,T,W,l)$ and two resources $r_0,s_0 \in
\Mpp$.

\emph{Output:} If  $r_0 \simeq s_0$, then
at least one computation returns YES; if $r_0 \not\simeq s_0$, then
all computations return NO.

\smallskip
\textbf{Procedure:}\smallskip

\algcom{It stepwise constructs a tree (a \emph{proof tree}, also
called a \emph{tableau}), which is stored in a ``program
variable'' $\ct$ (Current Tree); its nodes are labeled with pairs
of resources. A~node of $\ct$ is called an \emph{identity-node} if its label
is of the form $(r,r)$; each such node will be a \emph{successful
leaf} of the
constructed tree. The outcome YES will be returned iff a~\emph{successful tree}, i.e.\ a tree whose all
leaves are successful, will be constructed.}
\begin{enumerate}
\itemsep=0.9pt
	\item
Create the root labeled with the input pair $(r_0,s_0)$; this node
		constitutes the initial current tree, hence the
		initial value of $\ct$.
	\item

		\textbf{while} there is a non-identity leaf
		in $\ct$ \textbf{do}

		\textbf{begin}
		
		In $\ct$ choose a leaf
		$\n$ labeled with $(r,s)$ where $r\neq s$,
		and process it as follows:
		\begin{itemize}
			\item \textbf{if} the rule REDUCE \algcom{described
				below} is applicable to $\n$
			\\
				\textbf{then}
				apply it; by doing this, $\n$ gets exactly
				one child-node $\bar{\n}$

				\algcom{
					$\bar{\n}$ becomes a new leaf in
					(the extended) $\ct$, and is labeled
				with some $(\bar{r},\bar{s})$ where
				$\bar{r}+\bar{s}$ is strictly smaller
				than $r+s$ in the
				cardinality-lexicographic order
				$\sqsubseteq$};
			\item \textbf{otherwise} \algcom{when REDUCE is not
				applicable to $\n$}, apply EXPAND to
				$\n$				
				
				\algcom{which fails if $r\not\simeq_1
				s$, in which case the
				computation returns NO, and otherwise creates
				(at most) $2\cdot |T|$ children of $\n$}.

		\end{itemize}
		\textbf{end}

		RETURN YES \algcom{here $\ct$ is successful, since all leaves
		are identity-nodes}.
\end{enumerate}

\end{quote}

To formulate the rule EXPAND, we introduce the following definitions,
referring to the underlying  labeled Petri net $N=(P,T,W,l)$.
For $t \in T$, and $r,s, r',s' \in \Mpp$, the pair $(r',s')$ is a
\emph{$t$-child of} $(r,s)$
if there is $u\in T$ such that $l(u) = l(t)$, $(r+(\pre{t}-r)) \tu{t} r'$, and
$(s+(\pre{t} -r)) \tu{u} s'$. (Recall the diagram
in Definition~\ref{def:resourcebis}.)
By $\nxt_t(r,s)$ we denote the set of all $t$-children of $(r,s)$.

\medskip
We observe a trivial fact that shows the soundness of the following
description of EXPAND.

\begin{claim}\label{cl:notsimeqone}
We have $r\not\simeq_1 s$ iff $\nxt_t(r,s)$ or  $\nxt_t(s,r)$ is
empty for some $t\in T$.
\end{claim}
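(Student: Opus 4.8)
The plan is to prove the claim by simply unfolding the definition of $\simeq_1$ and exploiting the fact that $\simeq_0\mathop{=}\Mpp\times\Mpp$ is the universal relation, so that the ``$(r',s')\in\simeq_0$'' part of the resource transfer property is vacuously true and $\simeq_1$ is governed purely by the existence of imitating firings. First I would record two elementary facts. (i) For every $t\in T$ the firing $(r+(\pre{t}-r))\tu{t}r'$ always exists, since $t$ is enabled in $r+(\pre{t}-r)=r\cup\pre{t}\geq\pre{t}$, and its result $r'$ is uniquely determined by $t$. (ii) By the definition of $\nxt_t$, the set $\nxt_t(r,s)$ is nonempty exactly when some $u\in T$ with $l(u)=l(t)$ is enabled in $s+(\pre{t}-r)$; that is, exactly when the (always available) $t$-firing from $r\cup\pre{t}$ admits an imitating firing from $s+(\pre{t}-r)$.

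Next I would characterize $\simeq_1$ directly via the symmetric relation $R^{*}=\{(r,s)\mid \nxt_t(r,s)\neq\emptyset$ and $\nxt_t(s,r)\neq\emptyset$ for all $t\in T\}$. Combining (i) and (ii), the resource transfer property w.r.t.\ $\simeq_0$ for a single pair $(r,s)$ reduces precisely to the condition ``$\nxt_t(r,s)\neq\emptyset$ for all $t\in T$'' (the requirement $(r',s')\in\simeq_0$ being automatic). Hence $R^{*}$ is symmetric and satisfies the resource transfer property w.r.t.\ $\simeq_0$. Conversely, any symmetric relation $R$ with this property has, for each $(r,s)\in R$, both $\nxt_t(r,s)\neq\emptyset$ (transfer applied to $(r,s)$) and $\nxt_t(s,r)\neq\emptyset$ (transfer applied to $(s,r)$, which lies in $R$ by symmetry), so $R\subseteq R^{*}$. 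Therefore $R^{*}$ is the largest symmetric relation satisfying the resource transfer property w.r.t.\ $\simeq_0$, i.e.\ $\simeq_1\mathop{=}R^{*}$.

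Finally, negating the membership condition of $R^{*}$ gives $r\not\simeq_1 s$ iff $\nxt_t(r,s)=\emptyset$ or $\nxt_t(s,r)=\emptyset$ for some $t\in T$, which is exactly the statement of the claim. Since the whole argument is an unfolding of definitions, I do not expect a genuine obstacle; the only point that needs a little care is verifying that the ``largest symmetric relation'' formulation of $\simeq_1$ really collapses to the pointwise condition defining $R^{*}$, and this is precisely where the symmetry of $R^{*}$ together with the triviality of $\simeq_0$ is used.
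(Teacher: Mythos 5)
Your proof is correct and is exactly the definition-unfolding the paper has in mind: the paper states this claim as ``a trivial fact'' with no written proof, and your argument --- that the clause $(r',s')\in\,\simeq_0$ is vacuous, so $\simeq_1$ collapses to the pointwise nonemptiness of $\nxt_t(r,s)$ and $\nxt_t(s,r)$ for all $t$ --- supplies precisely the omitted details, including the needed check that the ``largest symmetric relation'' is realized by your $R^{*}$.
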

\begin{quote}
	\textbf{Application of EXPAND to a node $\n$ labeled with
	$(r,s)$ (where $r\neq s$):}

	\textbf{if} $r\not\simeq_1 s$ \textbf{then} RETURN NO

\algcom{hence the algorithm ALG
invoking EXPAND returns NO in this case},

	\textbf{otherwise}  for each $t\in T$ select
	(nondeterministically)
 exactly one pair of resources from $\nxt_t(r,s)$ and exactly one pair of resources
	from $\nxt_t(s,r)$, and create (at most) $2\cdot |T|$ children
	of $\n$ whose labels
	are precisely the pairs selected for all $t\in T$.

	\algcom{We recall that $T\neq\emptyset$; hence if $r\simeq_1
	s$, then the set of children of $\n$ is nonempty.}	
\end{quote}
We note some simple facts regarding EXPAND (that are used later, in the proof of
Theorem~\ref{th:algok}):

\begin{claim}[Properties of EXPAND]\label{cl:expand}
Let $\n$ be a leaf of $\ct$, labeled with $(r,s)$ where $r\neq s$. Then we have:
	\begin{enumerate}
\itemsep=0.9pt
	\item If $r\simeq s$, then there is at least one application of EXPAND
		to $\n$ such that each arising child of $\n$ is
			labeled with an equivalent pair, i.e.\ with some $(r',s')$ where $r'\simeq
			s'$.
		\item If $\eqlev(r,s)=k\in\Nat_+$ (hence $r\simeq_k
			s$, $r\not\simeq_{k+1} s$, $k\geq 1$), then
each application of EXPAND to $\n$ gives rise to
			at least one child of $\n$ that is labeled with some
			$(r',s')$ where $\eqlev(r',s')<k$.
	\end{enumerate}
	\end{claim}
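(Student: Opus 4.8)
The plan is to handle the two parts separately, both resting on a one-step reformulation of the stratified resource transfer property. For a pair $(a,b)$ I will say that it \emph{passes the level-$k$ test} if for every $t\in T$ the set $\nxt_t(a,b)$ contains at least one pair $(a',b')$ with $a'\simeq_k b'$. Unfolding the definitions of a $t$-child and of $\nxt_t$, and using that the firing $(a+(\pre{t}-a))\tu{t}a'$ produces a \emph{uniquely determined} $a'$, this is exactly the statement that $(a,b)$ satisfies the resource transfer property w.r.t.\ $\simeq_k$. The facts I will invoke are that $\simeq$ is a resource bisimulation (hence, being symmetric, has the resource transfer property), that $\simeq_{k+1}$ is by definition the largest symmetric relation having the resource transfer property w.r.t.\ $\simeq_k$, and the monotonicity $\simeq_0\supseteq\simeq_1\supseteq\cdots$ from Proposition~\ref{prop:stratproperties}.

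For Part $1$, I would use the resource transfer property of $\simeq$ directly. Given $r\simeq s$, for each $t\in T$ the unique firing $(r+(\pre{t}-r))\tu{t}r'$ has an imitating firing $(s+(\pre{t}-r))\tu{u}s'$ with $l(u)=l(t)$ and $r'\simeq s'$; I select this pair from $\nxt_t(r,s)$. Applying the same property to the pair $(s,r)\in\simeq$ (here the context is $\pre{t}-s$) yields, for each $t$, a pair in $\nxt_t(s,r)$ that is again $\simeq$-equivalent. Selecting these (at most) $2|T|$ pairs is one admissible application of EXPAND in which every child is labeled with an equivalent pair.

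For Part $2$, suppose $\eqlev(r,s)=k\geq 1$, so $r\simeq_k s$ and $r\not\simeq_{k+1}s$. I first claim that at least one of $(r,s)$, $(s,r)$ fails the level-$k$ test. Otherwise the relation $R=\simeq_{k+1}\cup\{(r,s),(s,r)\}$ would be symmetric and would satisfy the resource transfer property w.r.t.\ $\simeq_k$ (for the two new pairs this is precisely the level-$k$ test, and for the old pairs it holds because $\simeq_{k+1}$ already has that property), forcing $R\subseteq\simeq_{k+1}$ and hence $r\simeq_{k+1}s$, a contradiction. So, say, $(r,s)$ fails: there is some $t$ for which \emph{every} $(r',s')\in\nxt_t(r,s)$ has $r'\not\simeq_k s'$, i.e.\ $\eqlev(r',s')<k$. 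Since $k\geq 1$ gives $r\simeq_1 s$ by monotonicity, Claim~\ref{cl:notsimeqone} ensures $\nxt_t(r,s)\neq\emptyset$, so any application of EXPAND must select a pair from $\nxt_t(r,s)$, and that child has $\eqlev<k$. The case where $(s,r)$ fails is symmetric, using $\nxt_t(s,r)$ and the context $\pre{t}-s$.

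The main obstacle is bookkeeping the asymmetry built into $\nxt_t$: the subtracted multiset $\pre{t}-a$ in the definition of a $t$-child of $(a,b)$ is computed from the \emph{first} component, so the level-$k$ test for $(r,s)$ and for $(s,r)$ genuinely involve different contexts, and the two directions (handled by $\nxt_t(r,s)$ and $\nxt_t(s,r)$ respectively) must be kept cleanly separated throughout. The only other point requiring care is the uniqueness of $r'$ in $(r+(\pre{t}-r))\tu{t}r'$, which is what upgrades ``$(r,s)$ fails the level-$k$ test at $t$'' from ``some pair in $\nxt_t(r,s)$ is non-equivalent'' to the stronger ``\emph{every} pair in $\nxt_t(r,s)$ is non-equivalent at level $k$'', as needed to guarantee that \emph{every} application of EXPAND yields a child with $\eqlev<k$.
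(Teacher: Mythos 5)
Your proof is correct, and it fills in exactly the definitional unfolding that the paper's one-line proof (``the claims easily follow from the definitions of $\simeq$ and $\simeq_i$'') leaves to the reader: Part~1 from the resource transfer property of $\simeq$ and its inverse, Part~2 from the maximality of $\simeq_{k+1}$ together with the uniqueness of the first component of a $t$-child. No substantive difference from the intended argument.
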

\begin{proof}
The claims easily follow from the definitions of relations $\simeq$ and
	$\simeq_i$.
\end{proof}	

Now we describe the rule REDUCE, and also note its useful properties.

\begin{quote}
	\textbf{Application of REDUCE to a node $\n$ labeled with
	$(r,s)$ (where $r\neq s$), in a given current tree $\ct$:}

If there a node $\n'\neq \n$ on the path from the root to $\n$ in
	$\ct$ that is labeled with $(r',s')$ where $(r',s')\leq
	(r,s)$, then the rule REDUCE is applicable; otherwise it is
	not applicable.	

If the rule is applicable, then $(r',s')\leq(r,s)$ related to
	one such node $\n'$ is selected, and
	$\n$ gets precisely one child,
	namely $\bar{\n}$
	labeled with $(\bar{r},\bar{s})$ where we have:
	\begin{itemize}
		\item if $|r'|<|s'|$, or  $|r'|=|s'|$ and $r'\lelex
			s'$, then $(\bar{r},\bar{s})=(r,(s-s')+r')$,
			and
	\item if $|s'|<|r'|$, or  $|r'|=|s'|$ and $s'\lelex
		r'$, then $(\bar{r},\bar{s})=((r-r')+s',s)$.
	\end{itemize}			
	\algcom{Since identity-nodes are leaves, we have $r'\neq s'$.
	But we note that the case
	$(r',s')=(r,s)$ is not excluded; in this case the child
	$\bar{\n}$
	is an identity-node, labeled with $(r,r)$ or $(s,s)$.}
\end{quote}

\begin{claim}[Properties of REDUCE]\label{cl:reduce}
Let us consider a current tree $\ct$ and an application of REDUCE to a
	node $\n$ in $\ct$, labeled with
	$(r,s)$, where the application is based on a node $\n'$ labeled with $(r',s')$
	(where $(r',s')\leq (r,s)$); let the resulting child $\bar{\n}$ of $\n$
	be labeled with $(\bar{r},\bar{s})$. We then have:
	\begin{enumerate}
\itemsep=0.9pt
		\item $|\bar{r}+\bar{s}|<|r+s|$, or
			$|\bar{r}+\bar{s}|=|r+s|$ and
			$\bar{r}+\bar{s}\lelex r+s$;
		\item
			the edges on the path from $\n'$ to $\n$ could not be
			created by applications of REDUCE only (\ie at
			least one edge has been created by applying EXPAND);
	\item
			if $r'\simeq s'$ and $r\simeq s$, then
			$\bar{r}\simeq\bar{s}$;
		\item
			if $\eqlev(r',s')>\eqlev(r,s)$, then
			$\eqlev(\bar{r},\bar{s})=\eqlev(r,s)$.
	\end{enumerate}
	\end{claim}
\begin{proof}
	1) If  $|r'|<|s'|$, then $|\bar{s}|=|(s-s')+r'|<|s|$ (since
	$s'\leq s$); hence $|\bar{r}+\bar{s}|=|r+\bar{s}|<|r+s|$.
\\
	If $r'\lelex s'$, then $(s-s')+r'\lelex s$ (due to
	the first component $i$ in which $r'$ and $s'$ differ; hence
	$(r')_i<(s')_i$, and therefore $((s-s')+r')_i<(s)_i$); this
	entails that $r+((s-s')+r')\lelex r+s$.
Hence if  $|r'|=|s'|$ and $r'\lelex s'$, then
	$\bar{r}+\bar{s}\lelex r+s$ (since $\bar{r}=r$ and
	$\bar{s}=(s-s')+r'$).
	The case $|r'|>|s'|$, or $|r'|=|s'|$ and $s'\lelex r'$ is
	analogous.

\smallskip
	2) Since $(r',s')\leq (r,s)$, we have either $|r'+s'|<|r+s|$,
	or $(r',s')=(r,s)$. By 1) it is thus obvious that the edges
	on the path from $\n'$ to $\n$ could not be all created
by applications of REDUCE.

\smallskip
	3) Since $\simeq$ is a congruence, $r'\simeq s'$ entails
	$r'+(s-s')\simeq s'+(s-s')$, hence 	$r'+(s-s')\simeq s$.
	Then  $r\simeq s$ entails $r\simeq r'+(s-s')$
	(by symmetry and transitivity of $\simeq$). The claim is thus
	clear.

\smallskip
	4) We note that $\eqlev(r',s')\leq
	\eqlev(r'+(s-s'),s'+(s-s'))$, since $\simeq_i$ are congruences
	(by Proposition~\ref{prop:stratproperties}(1)).
	Hence $\eqlev(r',s')>\eqlev(r,s)$ entails
	$\eqlev(r'+(s-s'),s)>\eqlev(r,s)$, and thus
	$\eqlev(r,r'+(s-s'))=\eqlev(r,s)$, by
	Proposition~\ref{prop:stratproperties}(3).
\end{proof}

The following theorem asserts the termination and correctness of the
algorithm ALG.

\begin{theorem}[ALG decides resource bisimilarity]\label{th:algok}
Let $N=(P,T,W,l)$ be a labeled Petri net and $r_0,s_0 \in \Mpp$ be its resources.
Then:
\begin{enumerate}
\itemsep=0.9pt
	\item
For the input $N,r_0,s_0$ there are only finitely many computations of the
above nondeterministic algorithm ALG, and each computation finishes by
		constructing a finite tree $\mathcal{T}$ whose nodes are labeled
		with pairs of resources; moreover, either one leaf of
		$\mathcal{T}$ is
		\emph{unsuccessful}, i.e.\ labeled with $(r,s)$ where
		$r\not\simeq_1 s$ (in which case the computation
		returns NO), or $\mathcal{T}$ is successful, i.e.\
		all leaves of $\mathcal{T}$ are
		identity-nodes (in which case the
		computation returns YES).
		\item
			We have	 $r_0 \simeq s_0$ if, and only if, at
			least one computation (of ALG on $N,r_0,s_0$)
constructs a~successful tree.
\end{enumerate}
\end{theorem}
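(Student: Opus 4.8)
The plan is to prove the two items separately: first the termination together with the stated dichotomy (item~1), then the two implications of the correctness equivalence (item~2). Throughout I will rely only on the structural facts already isolated in Claims~\ref{cl:notsimeqone}, \ref{cl:expand}, \ref{cl:reduce}, and on Proposition~\ref{prop:stratproperties}.

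For item~1 the key point is that every branch of the constructed tree is finite; finiteness of the whole tree then follows by König's lemma, since each node has at most $2\cdot|T|$ children (via EXPAND) or exactly one child (via REDUCE). To rule out an infinite branch I would split its edges by the rule that produced them. Whenever EXPAND is applied to a node, REDUCE was not applicable, so no ancestor carries a label $\leq$ the current one; hence along a branch the labels of the EXPAND-nodes form a sequence in which no earlier pair is $\leq$ a later pair, and by Dickson's lemma such a sequence is finite. Between two consecutive EXPAND-nodes only REDUCE is applied, and by Claim~\ref{cl:reduce}(1) each such step strictly decreases $r+s$ in the well-order $\sqsubseteq$; thus only finitely many consecutive REDUCE-steps are possible. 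Consequently each branch is finite, the tree is finite, the \textbf{while}-loop terminates, and (all nondeterministic choices ranging over finite sets) there are only finitely many computations. The dichotomy is then immediate: the loop either aborts inside EXPAND, leaving the processed node as an unsuccessful leaf labeled with some $(r,s)$ where $r\not\simeq_1 s$ (Claim~\ref{cl:notsimeqone}), or it exits because no non-identity leaf remains, i.e.\ with a successful tree.

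For the ``only if'' part of item~2 (completeness) I would exhibit, assuming $r_0\simeq s_0$, one computation all of whose node-labels are $\simeq$-pairs. This invariant is maintained: at a node with $r\simeq s$ we have $r\simeq_1 s$, so EXPAND does not abort, and by Claim~\ref{cl:expand}(1) some legal application of EXPAND makes every child an equivalent pair; at a REDUCE-node the ancestor label $(r',s')$ already satisfies $r'\simeq s'$ by the invariant, so Claim~\ref{cl:reduce}(3) yields $\bar{r}\simeq\bar{s}$ for the unique child. Since item~1 guarantees that this computation halts and the invariant prevents a NO-answer, it halts with all leaves being identity-nodes, i.e.\ with a successful tree.

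For the ``if'' part (soundness) I would argue by contradiction via a minimal counterexample, and this is the step I expect to be the main obstacle, because of the ``backward'' dependence of REDUCE on an ancestor. Assume a successful tree was produced but some node-label is not a $\simeq$-pair, let $k$ be the least finite value of $\eqlev$ occurring among the labels, and among the labels with $\eqlev=k$ pick one, at a node $\n$ with label $(r,s)$, minimizing $r+s$ in $\sqsubseteq$. If $\n$ is an EXPAND-node, then $k\geq 1$ (EXPAND did not abort) and Claim~\ref{cl:expand}(2) yields a child with $\eqlev<k$, contradicting minimality of $k$. If $\n$ is a REDUCE-node based on an ancestor $(r',s')\leq(r,s)$, then either $\eqlev(r',s')>k$, in which case Claim~\ref{cl:reduce}(4) and~(1) give a child with $\eqlev=k$ but strictly smaller $\sqsubseteq$-sum, contradicting minimality of the sum; or $\eqlev(r',s')=k$, which together with $r'+s'\leq r+s$ and minimality of the sum forces $(r',s')=(r,s)$. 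In this last case the ancestor carries the very same label, so I repeat the analysis at that strictly-higher node; the resulting chain of equal-label REDUCE-ancestors ascends strictly in a finite tree and must terminate, necessarily at a node that is either an EXPAND-node or a REDUCE-node whose ancestor has $\eqlev>k$ --- and, at the latest, at the root, which (being non-identity and having no ancestors) is always an EXPAND-node. Either way we reach one of the contradictions above, so no label has finite $\eqlev$; in particular $r_0\simeq s_0$.
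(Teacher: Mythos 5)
Your proof is correct. Part~1 and the completeness half of Part~2 follow the paper's own argument essentially verbatim: termination via Dickson's lemma applied to the labels of the EXPAND-nodes (at which REDUCE was inapplicable, so no earlier such label is $\leq$ a later one), with Claim~\ref{cl:reduce}(1) bounding runs of consecutive REDUCE steps and K\"onig's lemma converting branch-finiteness into tree-finiteness; and, for completeness, the invariant that every label is a $\simeq$-pair, propagated by Claim~\ref{cl:expand}(1) and Claim~\ref{cl:reduce}(3). Where you genuinely diverge is the soundness half. The paper argues the contrapositive top-down: starting from the root with finite $\eqlev$, it threads a single branch along which $\eqlev$ never increases, invoking Claim~\ref{cl:reduce}(2) (at least one EXPAND edge lies between a REDUCE node and the ancestor it is based on) to guarantee that the hypothesis $\eqlev(r',s')>\eqlev(r,s)$ of Claim~\ref{cl:reduce}(4) holds at every REDUCE step of that branch; the branch then ends in a leaf of finite $\eqlev$, which cannot be an identity-node. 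You instead take a doubly minimal counterexample (least finite $\eqlev$, then $\sqsubseteq$-least sum) inside an allegedly successful tree; this forces the only troublesome REDUCE case into the degenerate form $(r',s')=(r,s)$, which you eliminate by ascending the chain of equal-label ancestors until an EXPAND-node (at the latest the root, where REDUCE is never applicable) is reached. In effect you trade Claim~\ref{cl:reduce}(2) for finiteness of the tree; both devices serve the same purpose of cutting off an unbounded regress of REDUCE steps. Your version is slightly longer but has the merit of making explicit the case $(r',s')=(r,s)$, which the paper's branch construction handles only implicitly.
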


\begin{proof}
1) Any constructed tree is finitely branching, since each node has at
	most $2\cdot|T|$ children. If there was an infinite
	computation, it would construct a tree with an infinite
	branch (by K\"{o}nig's lemma); let us fix such a branch
	$\textsc{b}$ for the sake of contradiction.
	Each edge on $\textsc{b}$ results by an
	application of EXPAND or REDUCE.
	Claim~\ref{cl:reduce}(1)
	entails that infinitely many edges in $\textsc{b}$ have arisen by using
	EXPAND (since we cannot have an infinite row of REDUCE
	applications).  Hence we get an infinite sequence $(r_1,s_1)$,
	$(r_2,s_2)$, $\dots$ of labels related to nodes
	$\n_1,\n_2,\dots$ in $\textsc{b}$ where EXPAND was used, and
	thus REDUCE was not
	applicable. But Dickson's lemma contradicts this, since there
	must exist some $i<j$ such that $(r_i,s_i)\leq (r_j,s_j)$
	(and thus REDUCE would be applicable to $\n_j$).

	\medskip
2) We analyze the respective two cases:
	\begin{itemize}
		\item
	Let		$r_0 \simeq s_0$.
We consider a computation which keeps the property that all nodes are
	labeled with equivalent pairs (each node is labeled with some
	$(r,s)$ where $r\simeq s$); there is such a computation by
	Claim~\ref{cl:expand}(1) and Claim~\ref{cl:reduce}(3).
			All leaves of the constructed tree are then
			successful (being identity-nodes).
\item
	Let  $r_0 \not\simeq s_0$ (hence $\eqlev(r_0,s_0)=k\in\Nat$),
			and let us consider the tree $\mathcal{T}$ constructed
			by an arbitrarily chosen computation. By recalling
			Claim~\ref{cl:expand}(2) and
			Claim~\ref{cl:reduce}(2,4) we deduce that there
			must be a branch of $\mathcal{T}$ along which
			the equivalence level never increases (it
			drops along each edge related to EXPAND, and
			remains the same along each edge related to
			REDUCE). Hence the leaf of this branch must be
			unsuccessful (labeled with some $(r,s)$ where
			$r\not\simeq_1 s$).
	\end{itemize}	

\vspace*{-6mm}
\end{proof}

\section{Conclusions and additional remarks}\label{sec:conclusions}

In this paper we have investigated the decidability issues for two
congruent restrictions of  bisimulation equivalence (denoted by $\sim$) in classical
labeled Petri nets (P/T-nets).

These congruences are resource similarity (denoted by $\approx$)
and resource bisimilarity (denoted by $\simeq$), as
defined in \cite{Bashkin_FI2003}; both try to clarify when replacing a
resource (submarking) in a Petri net marking with a similar
resource does not change the observable system behavior.

Here we have stressed that $\approx$ is the largest congruence
included in $\sim$, and that $\simeq$ is the largest congruence
included in $\sim$ that is a bisimulation; we thus also have
$\simeq\mathop{\subseteq}\approx\mathop{\subseteq}\sim$. Besides a
straightforward fact that $\approx$ is a \emph{strict} refinement of
$\sim$ in general (i.e., $\approx\mathop{\subsetneq}\sim$ for some labeled
Petri nets), we have shown here also that $\simeq$ strictly refines
$\approx$ ($\simeq\mathop{\subsetneq}\approx$); the latter fact
answered a question that was left open in~\cite{Bashkin_FI2003, 10.1007/978-3-319-57861-3_3}.
We can also notice that deciding the relations $\approx$ and $\simeq$
can be used for
deducing bisimilar states,
and thus help to increase the efficiency of verification
by reducing the state space of the relevant systems.
For another application of resource equivalences we can refer to a Petri
net reduction, in~\cite{bashkin2000reduction}.

Since resource similarity and resource bisimilarity are congruences
(w.r.t.\ addition), they are finitely-based; more specifically, they are
generated by their minimal non-identity elements.
The existence of such finite bases for $\approx$ and $\simeq$ gives
some hope at least for semidecidability of these
relations; the decidability proof in~\cite{Christensen1993} for
labeled communication-free Petri nets (under the name Basic Parallel
Processes), where we have $\simeq\mathop{=}\approx\mathop{=}\sim$, was based on such semidecidability.
Nevertheless, the previous research~\cite{Bashkin_FI2003,
10.1007/978-3-319-57861-3_3} already clarified that resource
similarity is undecidable (as well as bisimilarity),
while decidability of resource bisimilarity
remained  open.

In this paper we have shown that resource bisimilarity is decidable
for labeled Petri nets. In principle, we could proceed along the lines
of two semidecision procedures like~\cite{Christensen1993} but we have
chosen to present a tableau-based algorithm deciding the problem.
Regarding the computational complexity, we only mention the known
PSPACE-completeness of $\sim$ for Basic Parallel
Processes~\cite{Jancar22}, where $\sim\mathop{=}\simeq$.
(The PSPACE-lower bound was shown in~\cite{Srba03}.)

An interesting question that we have left open here is if the
mentioned finite basis of $\simeq$ is computable (which does not
follow from the decidability of $\simeq$). In the case of labeled
communication-free Petri nets (i.e., Basic Parallel Processes) the
answer is positive: it follows by the fact that~\cite{Jancar22} shows
that a Presburger-arithmetic description of the relation $\sim$ can be
computed in this case.
We note that for resource similarity we know that the finite basis is
not computable, since the relation $\approx$ is undecidable.

\nocite{*}
\bibliographystyle{fundam}

\end{document}